\definecolor{red}{rgb}{1,0,0}
\definecolor{blue}{rgb}{0,0,1}
\definecolor{darkblue}{rgb}{0,0,0.4}
\begin{document}
\newtheorem{Def}{Definition}[section]
\newtheorem{exe}{Example}[section]
\newtheorem{prop}[Def]{Proposition}
\newtheorem{theo}[Def]{Theorem}
\newtheorem{lem}[Def]{Lemma}
\newtheorem{rem}{\noindent\mbox{Remark}}[section]
\newtheorem{coro}[Def]{Corollary}
\newcommand{\Ker}{\rm Ker}
\newcommand{\Soc}{\rm Soc}
\newcommand{\lra}{\longrightarrow}
\newcommand{\ra}{\rightarrow}
\newcommand{\add}{{\rm add\, }}
\newcommand{\gd}{{\rm gl.dim\, }}
\newcommand{\End}{{\rm End\, }}
\newcommand{\overpr}{$\hfill\square$}
\newcommand{\rad}{{\rm rad\,}}
\newcommand{\soc}{{\rm soc\,}}
\renewcommand{\top}{{\rm top\,}}
\newcommand{\fdim}{{\rm fin.dim}\,}
\newcommand{\fidim}{{\rm fin.inj.dim}\,}
\newcommand{\gldim}{{\rm gl.dim}\,}
\newcommand{\cpx}[1]{#1^{\bullet}}
\newcommand{\D}[1]{{\mathscr D}(#1)}
\newcommand{\Dz}[1]{{\rm D}^+(#1)}
\newcommand{\Df}[1]{{\rm D}^-(#1)}
\newcommand{\Db}[1]{{\mathscr D}^b(#1)}
\newcommand{\C}[1]{{\mathscr C}(#1)}
\newcommand{\Cz}[1]{{\rm C}^+(#1)}
\newcommand{\Cf}[1]{{\rm C}^-(#1)}
\newcommand{\Cb}[1]{{\mathscr C}^b(#1)}
\newcommand{\K}[1]{{\mathscr K}(#1)}
\newcommand{\Kz}[1]{{\rm K}^+(#1)}
\newcommand{\Kf}[1]{{\rm K}^-(#1)}
\newcommand{\Kb}[1]{{\mathscr K}^b(#1)}
\newcommand{\modcat}[1]{#1\mbox{{\rm -mod}}}
\newcommand{\stmodcat}[1]{#1\mbox{{\rm -{\underline{mod}}}}}
\newcommand{\pmodcat}[1]{#1\mbox{{\rm -proj}}}
\newcommand{\imodcat}[1]{#1\mbox{{\rm -inj}}}
\newcommand{\opp}{^{\rm op}}
\newcommand{\otimesL}{\otimes^{\rm\bf L}}
\newcommand{\rHom}{{\rm\bf R}{\rm Hom}\,}
\newcommand{\pd}{{\rm pd}}
\newcommand{\Hom}{{\rm Hom \, }}
\newcommand{\Coker}{{\rm coker}\,\,}
\newcommand{\Ext}{{\rm Ext}}
\newcommand{\im}{{\rm im}}
\newcommand{\Cone}{{\rm Cone}}
\newcommand{\fini}{{finitistic}}
\newcommand{\proj}{{\rm  proj}}
\newcommand{\al}{{\alpha}}
\newcommand{\Tr}{{\rm trace}}
\newcommand{\tr}{{\rm tr}}
\newcommand{\Tor}{{\rm  Tor}}
\newcommand{\ad}{{\rm  add}}
\newcommand{\StHom}{{\rm \underline{Hom} \, }}
\newcommand{\F}{{\mathbb{F}_q}}

\renewcommand{\theequation}{\arabic{section}.\arabic{equation}}
\pagenumbering{arabic}{\Large \bf

\begin{center}
	Multilevel constructions of constant dimension codes based on one-factorization  of  complete graphs
\end{center}}
\centerline{ {\sc Dengming Xu$^1$\qquad Mengmeng Li$^2$   }}
\begin{abstract}
Constant dimension codes  (CDCs) have become an important object  in  coding theory due to their application in random network coding.  The multilevel construction is one of the most effective ways to  construct constant dimension codes. The paper is devoted to constructing  CDCs by  the multilevel construction. Precisely,   we first choose  an appropriate  skeleton code   based on the transformations of binary vectors related to
the one-factorization of complete graphs;  then we construct CDCs by using the  chosen skeleton code, where quasi-pending blocks are used; finally, we calculate the dimensions
by use of  known constructions
of optimal Ferrers diagram rank metric codes.
As applications, we improve the  lower bounds of  $\overline{A}_q(n,8,6)$ for $16\leq n\leq 19.$
\end{abstract}

\noindent {\bf Keywords}: Constant dimension codes; Multilevel construction; Quasi-pending blocks; Complete graphs.
\vskip-1cm

\renewcommand{\thefootnote}{\alph{footnote}}
\setcounter{footnote}{-1} \footnote{$^1$ Corresponding author, Sino-European Institute of Aviation Engineering,
Civil Aviation University of China, 300300 Tianjin,
PR China. E-mail:
xudeng17@163.com.
}
\setcounter{footnote}{-1} \footnote{$^2$ College of Science,
Civil Aviation University of China, 300300 Tianjin,
PR  China. E-mail: limengm00@163.com
}

\section{Introduction}
Let $\mathbb{F}_{q}$ be the finite field of order $q$ and $\mathbb{F}_{q}^{n}$ be the $n$ dimensional vector space over $\mathbb{F}_{q}$. Let $\mathcal{P}_{q}(n)$ be the set of all subspaces of $\mathbb{F}_{q}^{n}$. Given a nonnegative integer $ k \leq n$, denote  by  $\mathcal{G}_{q}(n,k)$ the set of all $k$-dimensional subspaces of $\mathbb{F}_{q}^{n}$. It is well-known  that $$	
\left| \mathcal{G}_q(n, k) \right|=\left[\begin{smallmatrix}
n\\k
\end{smallmatrix}\right]_q:=\prod_{i=1}^{k} \frac{q^{n-k+i}-1}{q^i-1}
$$ where $\left[\begin{smallmatrix}
n\\k
\end{smallmatrix}\right]_q$ is the Gaussian  coefficient.
For any two subspaces $U,V\in \mathcal{P}_{q}(n)$, the subspace distance  between $U$ and $V$ is defined by
\begin{center}
$d_{S}(U,V)=\dim(U)+\dim(V)-2\dim(U\cap V).$
\end{center}
A subset $\mathscr{C}$ of $\mathcal{G}_{q}(n,k)$ is called an $\left( n,M,d,k\right) _{q}$ constant dimension code (CDC) if it has cardinality $M$ and $d=\min\{d_{S}(U,V) \mid (U,V)\in \mathscr  C^2, U\neq V\}.$ Denote by $A_{q}(n,d,k)$   the maximum size of an $(n,M,d,k)_{q}$ code. Then an $(n,M,d,k)_{q}$-CDC  is said to be optimal if $M=A_{q}(n,d,k)$.

Constant dimension codes have received significant attention since Koetter and Kschischang  presented an application of subspaces codes for error correction in random network coding  \cite{koetter2008coding}. One main problem on CDCs is to improve the lower bound of $A_{q}(n,d,k)$
for given parameters $n,d,k.$ In \cite{silva2008rank}, it was proved that  a simple construction based on lifting of maximum rank distance (MRD) codes can produce asymptotically optimal CDCs. However, there are still big gaps between the sizes of lifted MRD codes and known upper bounds for CDCs.
Since 2008, various methods have been used to construct CDCs for improving the lower bound of $A_q(n,d,k)$, such as multilevel construction\cite{etzion2009error,liu2020parallel,trautmann2010new,silberstein2015subspace}, linkage construction\cite{gluesing2016construction,heinlein2019new,huimin2022new}, coset construction\cite{heinlein2017coset}, and so on.

Multilevel construction  was first introduced  by Etzion and Silberstein in \cite{etzion2009error}, which is a  generalization of  the construction of the lifted MRD codes.
In \cite{trautmann2010new}, Trautmann and Rosenthal indicated that the pending dots  of an Ferrers diagram can increase  the subspace distance.  In \cite{etzion2012codes}, Etzion and Silberstein  improved the multilevel construction by using pending dots and complete graphs.
In \cite{silberstein2015subspace} , Silberstein and Trautmann further generalized multilevel construction by using  quasi-pending blocks.  Multilevel construction has become an effective way and was widely used to  construct CDCs, see also \cite{liu2023constant,liu2022double}.

In the paper, we focus on constructing $(nt,4t,3t)_q$ CDCs by the multilevel construction.  Precisely,   we first choose  an appropriate  skeleton code  by   the extensions and permutations of binary vectors related to  one-factorization of complete graphs, this provides a new idea to choose the skeleton codes in the multilevel construction;  then, inspired by the work in \cite{etzion2012codes,silberstein2015subspace},  we construct CDCs by using the chosen  skeleton code, where quasi-pending blocks are used; finally, we calculate the dimensions
by use of the known constructions
of optimal Ferrers diagram rank metric codes.
As applications, we improve the  lower bounds of  $\overline{A}_q(n,8,6)$ for $16\leq n\leq 19.$

The paper is organized as follows.	In Section 2, we  introduce some definitions and known results which will be used in the paper. In Section 3, we give the constructions of $(nt,4t,3t)_q$  CDCs and calculate the dimensions, as applications of the main result, new lower bounds of $\overline{A}_q(n,8,6)$  for $16\leq n\leq 19$ are provided . In  Section 4, we give the conclusions of the  paper.

\section{Preliminaries}
In this section, we  introduce some definitions and known results which will be needed in the paper.

\subsection{Ferrers diagram rank metric codes}

Let $\mathbb{F}^{m\times n}_{q}$ be the set of all $m\times n$ matrices over $\mathbb{F}_{q} $. Let  $A$ and $B$ in $\mathbb{F}^{m\times n}_{q}$, the rank metric $d_{R}\left( A,B\right) $ of $A$ and $B$ is defined by ${\rm rank}\left( A-B\right) $. An $\left( m \times n ,M,\delta \right)_{q} $ rank metric code $\mathcal{C}$ is a subset of $\mathbb F^{m\times n}_{q}$ with cardinality $M$ and
$ d_{R}(A,B)\geq \delta $ for all $ A,B \in \mathcal{C} $ with $ A \neq B.$

For an $\left( m \times n ,M,\delta \right)_{q} $ rank-metric code $\mathcal{C}$,  the Singleton-like upper bound (\cite{gabidulin1985theory,delsarte1978bilinear}) implies that

\hskip7cm $\textstyle{M \leq q^{\max\left\lbrace m,n\right\rbrace \left(  \min\left\lbrace m,n\right\rbrace -\delta+1\right)  }}.$
\\
If the equation holds, then $\mathcal{C}$ is called a maximum rank metric  (MRD) code. Furthermore, if $\mathcal{C}$ is a $k$-dimensional $\mathbb{F}_{q}$-linear subspace of $\mathbb{F}^{m\times n}_{q}$, then  it is said to be linear, and is denoted by $\left[ m\times n,k,\delta\right]_{q} $. Linear MRD codes were  constructed for all possible parameters (\cite{gabidulin1985theory,delsarte1978bilinear}). 


\begin{theo}\label{mrtheo1}\emph{\cite[Definition 3]{silva2008rank}}
Let $n \geq 2k$ and $\mathcal{D}$ be a $\left[ k\times \left( n-k\right) , \delta\right]_{q}$-MRD code. Then the lifted MRD code
$\mathscr{C}=\left\lbrace rowspace ( I_{k} \mid  A )  : A\in \mathcal{D} \right\rbrace $
is an $( n , q^{( n-k) ( k-\delta+1) } , 2\delta , k ) _{q}$-CDC, where $I_{k}$ is the $k\times k$ identity matrix.
\end{theo}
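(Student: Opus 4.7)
My plan is to verify in turn the three pieces of data in the claim: the cardinality of $\mathscr C$, that every codeword is indeed a $k$-dimensional subspace of $\mathbb F_q^n$, and the minimum subspace distance. Step (ii) is immediate since $(I_k\mid A)$ has full row rank $k$, so its rowspace lies in $\mathcal G_q(n,k)$. For step (i), note that the hypothesis $n\geq 2k$ gives $k\leq n-k$, so the Singleton-like upper bound applied to $\mathcal D$ yields $|\mathcal D|=q^{(n-k)(k-\delta+1)}$; moreover the assignment $A\mapsto\mathrm{rowspace}(I_k\mid A)$ is injective because the block $(I_k\mid A)$ is already in reduced row echelon form and is therefore uniquely recovered from the rowspace. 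Thus $|\mathscr C|=|\mathcal D|=q^{(n-k)(k-\delta+1)}$.

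The substantive step is (iii), translating rank distance in $\mathcal D$ into subspace distance in $\mathscr C$. Given distinct $A,B\in\mathcal D$ put $U=\mathrm{rowspace}(I_k\mid A)$ and $V=\mathrm{rowspace}(I_k\mid B)$. I would stack the two generator matrices and subtract the first $k$ rows from the last $k$:
\[
\begin{pmatrix}I_k & A\\ I_k & B\end{pmatrix}\;\longrightarrow\;\begin{pmatrix}I_k & A\\ 0 & B-A\end{pmatrix}.
\]
The rank of the right-hand matrix is $k+\mathrm{rank}(B-A)$, so $\dim(U+V)=k+\mathrm{rank}(B-A)$. The dimension formula $\dim(U)+\dim(V)=\dim(U+V)+\dim(U\cap V)$ then yields $\dim(U\cap V)=k-\mathrm{rank}(B-A)$, and consequently
\[
d_S(U,V)=2k-2\dim(U\cap V)=2\,\mathrm{rank}(B-A)\geq 2\delta,
\]
with equality attained for some pair because $\mathcal D$ has minimum rank distance exactly $\delta$.

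I do not anticipate a genuine obstacle here; the only nontrivial ingredient is the row-reduction identity linking $\dim(U\cap V)$ to $\mathrm{rank}(B-A)$, and this is the single calculation on which the whole argument hinges. Once that identity is in place, the cardinality, dimension, and distance assertions all follow directly from the defining properties of an MRD code.
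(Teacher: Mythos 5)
Your argument is correct. Note that the paper does not prove this statement at all: it is quoted as a known result from the reference on lifted MRD codes, so there is no internal proof to compare against. Your three steps (cardinality via the Singleton-like bound with $k\leq n-k$ and injectivity of $A\mapsto\mathrm{rowspace}(I_k\mid A)$ from uniqueness of the RREF, membership in $\mathcal{G}_q(n,k)$ from full row rank, and the row-reduction identity $\dim(U+V)=k+\mathrm{rank}(B-A)$ giving $d_S(U,V)=2\,\mathrm{rank}(B-A)\geq 2\delta$ with equality attained since the MRD code has minimum rank distance exactly $\delta$) constitute the standard and complete justification of the lifted MRD construction.
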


In the following, we denote by  $\overline{A}_{q}(n,2\delta,k)$ the maximum size of an $(n,2\delta,k)_{q}$ CDC  which contains a lifted MRD code as a subset.

Ferrers diagrams are used to represent partitions by arrays of dots and empty entries. They are defined as follows.
\begin{Def}\label{mrDef2}\emph{\cite{etzion2009error}}
Given positive integers $m$ and $n$, an $m\times n $ Ferrers diagram $\mathcal{F}$ is an  $m\times n $ array of dots and empty entries satisfying the following conditions$: $

\begin{itemize}[itemsep=0pt,topsep=0pt]
	
	\item the number of dots in each row is at most the number of dots in the previous row;
	
	\item all dots are shifted to the right;
	
	\item the first row has $n$ dots and the rightmost column has $m$ dots.
	
\end{itemize}
\end{Def}

\noindent Let $\mathcal{F}$ be an $m\times n $ Ferrers diagram. For $1\leq i \leq n$, let  $\gamma_{i}$ be the number of dots in the $i$th column of $\mathcal{F}$. Then $\mathcal{F}$ can be written as $\mathcal{F}=\left[ \gamma_{1}, \gamma_{2},\dots \dots \dots , \gamma_{n}\right]$. The transposed Ferrers diagram of $\mathcal{F}$ is defined by $\mathcal{F}^{t}=\left[ \rho_{m}, \rho_{m-1},\dots \dots \dots , \rho_{1}\right]$, where $\rho_{i}$  is the number of dots in the $i$th row of $\mathcal{F}$ for each $1\leq i\leq m$. An $m\times n $ Ferrers diagram  is said to be full if it has $mn$ dots.

\begin{exe}
Let $\mathcal{F}=\left[ 1,  2,  4 \right]$. Then
\begin{center}

	\renewcommand{\arraystretch}{0.58}
	\setlength{\arraycolsep}{0.22em}
	$\begin{matrix}
		\mathcal{F} =\bullet & \bullet &\bullet\\
		& \bullet& \bullet &\\
		& & \bullet&  \\
		& & \bullet&  \\
		
	\end{matrix}$
	\hskip2.5cm
	$\begin{matrix}
		\mathcal{F}^{t} =\bullet & \bullet &\bullet &\bullet \\
		& & \bullet& \bullet &\\
		& & & \bullet&  \\

	\end{matrix}$
\end{center}
\end{exe}

\begin{Def}\label{mrDef3}\emph{\cite{etzion2009error}}
An $[m\times n, k,\delta]_q$ rank metric code $\mathcal{C}$ is  called an $\left[ \mathcal{F},k,\delta \right]_{q} $ Ferrers diagram rank metric code (FDRMC) if for each $U \in \mathcal C$, all the entries  of $U$ not in $\mathcal F$ are zeros.
\end{Def}

The FDRMC of a full Ferrers diagram $\mathcal{F}$ is in fact a rank metric code. Note that if there exists an $\left[ \mathcal{F},k,\delta \right]_{q} $ code, then so does an $\left[ \mathcal{F}^{t},k,\delta \right]_{q} $ code. For given $\mathcal{F}$ and $\delta$, the largest possible dimension of an $\left[ \mathcal{F},k,\delta \right]_{q} $ code is denoted by $\dim\left( \mathcal{F},\delta \right) $. The following lemma gives an upper bound of $\dim\left( \mathcal{F},\delta \right) $.

\begin{lem}\label{mrlem1}\emph{\cite[Theorem 1]{etzion2009error}}
Let $\mathcal F$ be an Ferrers diagram and $\delta $ be a positive integer. For $0\leq i\leq \delta-1 $, denote by $\nu_i$ the number of dots in $\mathcal{F}$ after removing the first $i$ rows and the rightmost $\delta-1-i$ columns. Then
$\dim\left( \mathcal{F},\delta \right) \leq \min\{\nu_i\mid 0\leq i\leq \delta-1\}$.
\end{lem}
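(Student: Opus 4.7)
The plan is to fix an arbitrary $[\mathcal{F},k,\delta]_q$ FDRMC $\mathcal{C}$ and show that $k\leq \nu_i$ for every $0\leq i\leq \delta-1$; taking the minimum then yields the claim. The key tool is a projection argument: for each fixed $i$, define the linear map $\pi_i\colon \mathcal{C}\to \mathbb{F}_q^{\nu_i}$ that sends a codeword $M$ to its entries in the positions of $\mathcal{F}$ that survive after deleting the first $i$ rows and the rightmost $\delta-1-i$ columns. The codomain has exactly $\nu_i$ coordinates by definition of $\nu_i$. Once injectivity of $\pi_i$ is established, one immediately gets $q^k=|\mathcal{C}|\leq q^{\nu_i}$, hence $k\leq \nu_i$.

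The heart of the proof is therefore the injectivity of $\pi_i$. Suppose $M\in\mathcal{C}$ satisfies $\pi_i(M)=0$. Because $M$ is an FDRMC codeword, all its entries outside $\mathcal{F}$ already vanish, so the remaining nonzero entries must lie in the union of the first $i$ rows of the ambient $m\times n$ matrix and its rightmost $\delta-1-i$ columns. Decompose $M=M_1+M_2$, where $M_1$ is supported on the first $i$ rows and $M_2$ is supported on the rightmost $\delta-1-i$ columns. Then $\mathrm{rank}(M_1)\leq i$ (its row space has dimension at most $i$) and $\mathrm{rank}(M_2)\leq \delta-1-i$ (its column space has dimension at most $\delta-1-i$), so by subadditivity of rank,
\[
\mathrm{rank}(M)\leq \mathrm{rank}(M_1)+\mathrm{rank}(M_2)\leq i+(\delta-1-i)=\delta-1.
\]
Since every nonzero codeword in $\mathcal{C}$ has rank at least $\delta$, we must have $M=0$, which gives the required injectivity.

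Combining the two paragraphs, $\pi_i$ is an injective $\mathbb{F}_q$-linear map from $\mathcal{C}$ into $\mathbb{F}_q^{\nu_i}$, so $k\leq \nu_i$ for every admissible $i$. Minimizing over $i$ gives $\dim(\mathcal{F},\delta)\leq \min\{\nu_i\mid 0\leq i\leq \delta-1\}$, as required. I do not foresee a real obstacle in this argument: the only potentially subtle point is the rank-deficiency estimate for $M_1$ and $M_2$, where one has to be careful to bound $\mathrm{rank}(M_1)$ by the number of deleted rows and $\mathrm{rank}(M_2)$ by the number of deleted columns, rather than the other way around; once the decomposition is set up correctly, the rest is routine.
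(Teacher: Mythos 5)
Your argument is correct and is essentially the standard proof of this bound from the cited reference \cite{etzion2009error} (the paper itself states the lemma without proof): a codeword vanishing on the surviving dots is supported in the deleted $i$ rows and $\delta-1-i$ columns, hence has rank at most $\delta-1$ and must be zero, so the projection is injective and $k\leq \nu_i$. The only difference is that you phrase it for linear codes via injectivity of $\pi_i$, whereas the original argument compares two codewords agreeing on the surviving positions, which also covers nonlinear codes; both are equally valid for the lemma as stated here.
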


Denote  by  $v_{\min}(\mathcal F,\delta)$ the number $\min\ \{\nu_i\mid 0\leq i\leq \delta-1\}$.  Codes that attain  this upper bound are said to be optimal. Construction of optimal FDRMCs has attracted much attentions in recent years, please see  \cite{antrobus2019maximal,etzion2016optimal,etzion2009error,liu2019constructions,liu2023optimal,pratihar2023constructions,zhang2019constructions}  for details. Here we list several known classes of  optimal FDRMCs for later use.

\begin{lem}\label{firstcons}
\emph{\cite[Theorem 3]{etzion2016optimal}}
Let  $\mathcal{F}$ be  an $m\times n  $ Ferrers diagram with $m\!\geq \!n$. Suppose that  each of the rightmost $\delta\!-\!1$ columns of $\mathcal{F}$ has at least $n$ dots. Then there exists an optimal $\left[ \mathcal{F},\sum_{i=1}^{n-\delta+1}\gamma_{i},\delta \right]_{q} $ code for any prime power $q$.
\end{lem}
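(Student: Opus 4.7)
The plan has two pieces: verify that the upper bound of Lemma~\ref{mrlem1} equals $\sum_{i=1}^{n-\delta+1}\gamma_i$, and then construct an FDRMC attaining this dimension from a Gabidulin MRD code.

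For the upper bound, set $k = n-\delta+1$, so that $\nu_0 = \sum_{i=1}^k \gamma_i$. I would show $\nu_i \geq \nu_0$ for $1 \leq i \leq \delta-1$. The Ferrers structure gives $\nu_i = \sum_{j=1}^{k+i}\max(\gamma_j - i, 0)$, so $\nu_i - \nu_0 = -\sum_{j=1}^k \min(i, \gamma_j) + \sum_{j=k+1}^{k+i}\max(\gamma_j - i, 0)$. The first sum is at most $ki$, and the hypothesis $\gamma_j \geq n$ for $j \geq k+1$ forces the second to be at least $i(n-i)$. Hence $\nu_i - \nu_0 \geq i(n-i) - ki = i(\delta - 1 - i) \geq 0$, so $v_{\min}(\mathcal{F},\delta)=\nu_0$.

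For the construction, fix an $\mathbb{F}_q$-basis $\beta_1, \ldots, \beta_m$ of $\mathbb{F}_{q^m}$ and an $\mathbb{F}_q$-linearly independent tuple $\alpha_1, \ldots, \alpha_n \in \mathbb{F}_{q^m}$. Let $\mathcal{G}$ be the associated Gabidulin code of length $n$ and $\mathbb{F}_{q^m}$-dimension $k$, realized as an $[m \times n, mk, \delta]_q$ MRD code by identifying each codeword in $\mathbb{F}_{q^m}^n$ with its $m \times n$ matrix in the basis $\{\beta_i\}$. Define $\mathcal{C}$ to be the subcode of $\mathcal{G}$ consisting of codewords whose matrix representation is supported in $\mathcal{F}$. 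Then $\mathcal{C}$ is automatically an $\mathbb{F}_q$-linear FDRMC with rank distance at least $\delta$, and the claim reduces to $\dim_{\mathbb{F}_q}\mathcal{C} = \sum_{i=1}^k \gamma_i$.

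The dimension count is the technical heart of the proof and the main obstacle I anticipate. By the MDS property of Gabidulin codes, projection to the first $k$ coordinates is an $\mathbb{F}_{q^m}$-linear bijection, so imposing the support constraints on the leftmost $k$ columns carves out an $\mathbb{F}_q$-subspace of dimension exactly $\sum_{i=1}^k \gamma_i$. The difficulty is the rightmost $\delta-1$ columns: naively, each of them carries up to $m-n$ further zero constraints (confined to the top $m-n$ rows by the hypothesis $\gamma_j \geq n$), which could in principle drop the dimension by as much as $(m-n)(\delta-1)$. Overcoming this requires a deliberate choice of $\alpha_j$ and $\beta_i$ so that the extra constraints become redundant consequences of the left-column constraints. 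A natural strategy, using $m \geq n$, is to set $\alpha_j = \beta_j$ for $j \leq n$ and extend $\{\beta_i\}_{i>n}$ so that the Moore-matrix evaluations in the top $m-n$ rows of the right columns become $\mathbb{F}_q$-linear combinations of entries already present in the left columns; verifying this cleanly, most likely by exhibiting an explicit $\mathbb{F}_q$-basis of $\mathcal{C}$ indexed by the dots of the leftmost $k$ columns, is the step I expect to be hardest.
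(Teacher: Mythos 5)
This lemma is not proved in the paper at all: it is quoted from \cite[Theorem 3]{etzion2016optimal}, so there is no internal argument to compare against and your attempt has to stand on its own as a proof of the cited theorem. The first half of your proposal does stand. Writing $k=n-\delta+1$, your computation $\nu_i-\nu_0\geq i(n-i)-ki=i(\delta-1-i)\geq 0$ for $1\leq i\leq\delta-1$ is correct, so the bound of Lemma \ref{mrlem1} for such a diagram is exactly $\sum_{i=1}^{k}\gamma_i$, and consequently any $\left[\mathcal F,\sum_{i=1}^{k}\gamma_i,\delta\right]_q$ code is automatically optimal.

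The existence half, however, has a genuine gap, and it sits exactly where you flag it. Intersecting a Gabidulin code (viewed as $m\times n$ matrices) with the matrices supported on $\mathcal F$ gives, for an arbitrary choice of evaluation points $\alpha_j$ and basis $\beta_i$, only $\dim_{\mathbb F_q}\mathcal C\geq\sum_{i=1}^{k}\gamma_i-\sum_{j=k+1}^{n}(m-\gamma_j)$: the information-set bijection onto the first $k$ columns shows the left-column constraints cost nothing, but each empty cell in the rightmost $\delta-1$ columns is a further $\mathbb F_q$-linear condition which in general does cut the dimension, by up to $(m-n)(\delta-1)$ when $m>n$. Your proposed fix --- take $\alpha_j=\beta_j$ and extend the basis so that the right-column conditions become consequences of the left-column ones --- is stated as a strategy and explicitly left unverified; but this redundancy is precisely the content of the theorem, not a routine check. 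In particular, the left columns may have $\gamma_i>n$, so the values $f(\alpha_i)$ for $i\leq k$ need not lie in the span of $\beta_1,\dots,\beta_n$, and it is then not clear that the interpolated values $f(\alpha_j)$, $j>k$, land in $\langle\beta_1,\dots,\beta_{\gamma_j}\rangle$ for your choice; no argument is given that a suitable choice exists for all $m\geq n$, all $\delta$, and all admissible column profiles. Until that choice (or the explicit dot-indexed basis of $\mathcal C$ you mention) is exhibited and checked, your argument proves only that $\sum_{i=1}^{k}\gamma_i$ is an upper bound, not the existence statement of Lemma \ref{firstcons}.
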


\begin{lem}\label{stcons}\emph{\cite[Theorem $\uppercase\expandafter{\mathrm{III} }.6.$]{antrobus2019maximal}}
Let  $\mathcal{F}=[\gamma_1,\gamma_2,\cdots,\gamma_n]$ be an $m\times n $ Ferrers diagram with $m\geq n$. Let $2\leq \delta\leq n$, $l=n-\delta+1$ and $\varepsilon$ be the number of dots missing in the rightmost $\delta-1$ columns of $\mathcal{F}$. Suppose that
$\gamma_{i}\leq \gamma_{l+1}-\varepsilon \left( l+1-i \right) $ for $1\leq i\leq l$.
Then there exists an optimal $\left[ \mathcal{F},\sum_{i=1}^{l}\gamma_{i},\delta \right]_{q} $ code for any prime power $q$.

\end{lem}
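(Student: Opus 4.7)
The plan is to exhibit an explicit $[\mathcal{F},\sum_{i=1}^l\gamma_i,\delta]_q$ code as a subspace of $\mathcal{F}$-supported matrices; optimality then follows immediately from Lemma~\ref{mrlem1}, since $\nu_0=\sum_{i=1}^l\gamma_i$ and any Ferrers diagram code has $\mathbb{F}_q$-dimension at most $\min_i\nu_i\leq\nu_0$.

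My starting point would be the Etzion--Silberstein/Gabidulin template: since $m\geq n$, take the MRD code $\mathcal{G}$ of $\mathbb{F}_q$-dimension $ml$ obtained by evaluating linearized polynomials $f(x)=\sum_{j=0}^{l-1}a_j x^{q^j}$ at $\mathbb{F}_q$-linearly-independent elements $\alpha_1,\dots,\alpha_n\in\mathbb{F}_{q^m}$, representing each $f(\alpha_i)$ as a length-$m$ column via a fixed $\mathbb{F}_q$-basis of $\mathbb{F}_{q^m}$. The first candidate is the subcode $\mathcal{C}\subseteq\mathcal{G}$ of matrices supported on $\mathcal{F}$. A naive codimension count only yields $\dim\mathcal{C}\geq ml-(mn-|\mathcal{F}|)=\sum_{i=1}^l\gamma_i-\varepsilon$, which falls short of the target by exactly $\varepsilon$ --- the number of dots missing from the rightmost $\delta-1$ columns. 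So the heart of the argument is to recover this $\varepsilon$-dimensional slack.

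To do this, I would enlarge $\mathcal{G}$ by an $\varepsilon$-dimensional $\mathbb{F}_q$-subspace of higher-$q$-degree linearized terms, carefully choosing the coefficient subspace so that (i) each augmented polynomial still has at most $n-\delta$ roots in $\mathrm{span}_{\mathbb{F}_q}(\alpha_1,\dots,\alpha_n)$, thereby preserving rank distance $\delta$, and (ii) the extra evaluations occupy only the empty top-rows of the leftmost $l$ columns of $\mathcal{F}$. The hypothesis $\gamma_i\leq\gamma_{l+1}-\varepsilon(l+1-i)$ is calibrated so that, scanning from column $l$ leftward, each successive column offers at least $\varepsilon$ fresh empty rows to absorb one additional basis vector of this $\varepsilon$-dimensional extension. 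Decomposing the support conditions column by column in the fixed basis of $\mathbb{F}_{q^m}$ should then yield a triangular/row-reduced linear system whose rank is exactly $mn-|\mathcal{F}|$, forcing $\dim\mathcal{C}=\sum_{i=1}^l\gamma_i$.

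The main obstacle will be executing (i) and (ii) simultaneously: adjoining monomials $x^{q^j}$ for $j\geq l$ generically destroys the MRD property, so the $\varepsilon$-dimensional correction must be placed in very special position relative to the evaluation points, probably through an inductive column-by-column construction or an appeal to the structure of Moore matrices over $\mathbb{F}_{q^m}$. Showing that the column-size hypothesis is precisely strong enough to permit this augmentation while still matching the Singleton-like upper bound is where I expect the technical work to concentrate.
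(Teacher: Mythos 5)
This statement is not proved in the paper at all: it is quoted verbatim from Antrobus--Gluesing-Luerssen \cite[Theorem III.6]{antrobus2019maximal}, so what you have written is an attempt to reprove that cited theorem, and as such it contains a genuine gap. Your setup is sound as far as it goes: restricting a Gabidulin-type MRD code of $\mathbb{F}_q$-dimension $ml$ to the matrices supported on $\mathcal{F}$ gives, by counting the $mn-|\mathcal{F}|$ zero constraints, only $\dim\mathcal{C}\geq\sum_{i=1}^{l}\gamma_i-\varepsilon$, and your observation that a code of dimension $\nu_0=\sum_{i=1}^{l}\gamma_i$ would automatically be optimal by Lemma~\ref{mrlem1} is correct. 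But everything after that is a program rather than an argument, and the part you defer (``where I expect the technical work to concentrate'') is precisely the content of the theorem. Adjoining an $\varepsilon$-dimensional space of terms $a_jx^{q^j}$ with $j\geq l$ runs into a concrete obstruction: a nonzero linearized polynomial of $q$-degree up to $l-1+\varepsilon$ can have a root space of dimension $l-1+\varepsilon$, so the rank of its evaluation matrix is only guaranteed to be $\geq\delta-\varepsilon$, not $\delta$. Preserving distance therefore requires the high-degree coefficients to be tied to the low-degree ones or confined to special subspaces (this is exactly the mechanism behind constructions such as Lemma~\ref{gecons}, which needs divisibility/chain conditions on the column heights), and you give no argument that such a choice exists under the present hypothesis, nor that it can be made compatible with the support conditions simultaneously.

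A second, smaller problem is that your reading of the hypothesis is inaccurate: $\gamma_i\leq\gamma_{l+1}-\varepsilon(l+1-i)$ compares each of the leftmost $l$ columns with column $l+1$, not with its neighbour, so it does \emph{not} say that ``each successive column offers at least $\varepsilon$ fresh empty rows'' (the $\gamma_i$ may be constant across the left columns). Consequently the triangular/row-reduction heuristic you invoke to force the constraint system to have rank exactly $mn-|\mathcal{F}|$ (equivalently, to recover the $\varepsilon$ lost dimensions) is not supported by the stated condition as you interpret it. To turn the sketch into a proof you would either have to carry out the coefficient-restriction construction in detail and verify both the rank-distance bound and the exact dimension count under the true hypothesis, or simply do what the paper does and invoke \cite[Theorem III.6]{antrobus2019maximal} as a black box.
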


\begin{lem}\label{gecons}\emph{\cite[Theorem 3.2]{zhang2019constructions}}
Let $1=t_{0}< t_{1}< t_{2}< \dots < t_{l}\leq m$ be integers such that $t_{1} \mid  t_{2} \mid \dots \mid t_{l}$. Let $n, \delta$ be two integers for which $t_{l-1}< n\leq t_{l}$ and $n-t_{1}+1< \delta\leq n$. Let $\mathcal{F}=[\gamma_1,\gamma_2,\cdots,\gamma_n]$ be  an $m\times n $ Ferrers diagram with $m\geq n$. Suppose that
$\gamma_{n-\delta+1}\leq t_{1}$,
$\gamma_{n-\delta+2}\geq t_{1}$,
and $\gamma_{t_{i}+1}\geq t_{i+1}$ for $1\leq i\leq l-1$.
Then there exists an optimal $\left[ \mathcal{F},\delta \right]_{q} $ code for any prime power $q$.
\end{lem}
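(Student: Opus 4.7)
My plan is to build an optimal $[\mathcal{F},\delta]_q$ code by adapting the Gabidulin-type linearized-polynomial construction to the subfield tower
\[
\mathbb{F}_q\subseteq \mathbb{F}_{q^{t_1}}\subseteq \mathbb{F}_{q^{t_2}}\subseteq\cdots\subseteq \mathbb{F}_{q^{t_l}},
\]
which exists precisely because $t_1\mid t_2\mid \cdots\mid t_l$. First, I would apply Lemma \ref{mrlem1} to pin down the explicit target dimension $k^{*}:=v_{\min}(\mathcal{F},\delta)$. The conditions $\gamma_{n-\delta+1}\leq t_1$, $\gamma_{n-\delta+2}\geq t_1$ and $\gamma_{t_i+1}\geq t_{i+1}$, together with $t_{l-1}<n\leq t_l$ and $n-t_1+1<\delta\leq n$, dictate which of the $\nu_i$ realizes the minimum, so a routine comparison among $\nu_0,\nu_1,\ldots,\nu_{\delta-1}$ yields a closed form for $k^{*}$ in terms of the column heights $\gamma_j$ and the prescribed $t_i$.

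Next I would set up the construction. Fix an $\mathbb{F}_q$-basis of $\mathbb{F}_{q^{t_l}}$ that is compatible with the subfield chain, so that every element of $\mathbb{F}_{q^{t_i}}$ has, in column-vector form, nonzero coordinates only in its first $t_i$ positions. Pick evaluation points $\alpha_1,\ldots,\alpha_n\in \mathbb{F}_{q^{t_l}}$ that are $\mathbb{F}_q$-linearly independent and such that, for each $i$, the first $t_i$ of them span $\mathbb{F}_{q^{t_i}}$ over $\mathbb{F}_q$. Take the code to be the $\mathbb{F}_q$-linear span of matrices $M_f$ whose $j$-th column is the coordinate vector of $f(\alpha_j)$, where $f(x)=\sum_{s=0}^{n-\delta}a_s x^{q^s}$ is a linearized polynomial whose coefficients are constrained to lie in nested subfields $\mathbb{F}_{q^{t_{i(s)}}}$; the index $i(s)$ is chosen according to the column-height profile of $\mathcal{F}$ so that the corresponding column of $M_f$ fits inside $\mathcal{F}$.

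Then I would verify the two required properties. For the Ferrers constraint, the key point is that when $a_s\in \mathbb{F}_{q^{t_i}}$ the matrix representation of $a_s\alpha_j^{q^s}$ in the chosen basis is supported in the first $t_i$ rows of column $j$; combined with the lower bounds $\gamma_{t_i+1}\geq t_{i+1}$ and $\gamma_{n-\delta+2}\geq t_1$, this forces every entry of $M_f$ outside $\mathcal{F}$ to vanish column by column. For the rank condition, any nonzero linearized polynomial of $q$-degree at most $n-\delta$ has at most $q^{n-\delta}$ roots in $\mathbb{F}_{q^{t_l}}$, so its evaluation on the $\mathbb{F}_q$-independent $\alpha_1,\ldots,\alpha_n$ has rank at least $\delta$; restricting coefficients to subfields only shrinks the code and preserves this rank lower bound. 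A final parameter count, summing $t_{i(s)}$ over $0\leq s\leq n-\delta$, must be shown to equal $k^{*}$ to match the upper bound.

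The main obstacle I expect is the alignment step in the second paragraph: specifying, for each $s$, exactly which subfield $\mathbb{F}_{q^{t_{i(s)}}}$ receives $a_s$ so that both the Ferrers shape is respected and the $\mathbb{F}_q$-dimension hits $k^{*}$ on the nose. The divisibility $t_i\mid t_{i+1}$ and the column-height hypotheses of the lemma are tailored for this, but verifying it requires careful column-by-column bookkeeping rather than a clean global argument; in particular the two boundary conditions $\gamma_{n-\delta+1}\leq t_1$ and $\gamma_{n-\delta+2}\geq t_1$ play asymmetric roles, and reconciling them with the chain of inequalities $\gamma_{t_i+1}\geq t_{i+1}$ is where the technical heart of the proof lies.
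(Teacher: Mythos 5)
First, note that the paper does not prove this statement at all: it is quoted verbatim from \cite[Theorem 3.2]{zhang2019constructions}, so there is no in-paper argument to match. Your plan (restrict the coefficients of Gabidulin-type linearized polynomials to the subfield tower $\mathbb{F}_q\subseteq\mathbb{F}_{q^{t_1}}\subseteq\cdots\subseteq\mathbb{F}_{q^{t_l}}$ supplied by $t_1\mid t_2\mid\cdots\mid t_l$) is in the spirit of the cited construction, but as written it is not a proof, and one of its key steps is incorrect as stated.

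The genuine gaps are these. First, the claim that $a_s\in\mathbb{F}_{q^{t_i}}$ forces the $j$-th column contribution of $a_s\alpha_j^{q^s}$ to be supported in the first $t_i$ rows is false unless $\alpha_j$ itself lies in $\mathbb{F}_{q^{t_i}}$: the product $a_s\alpha_j^{q^s}$ lives in the compositum, which in a chain of subfields is the \emph{larger} of the two fields. Since a codeword $f(\alpha_j)=\sum_s a_s\alpha_j^{q^s}$ mixes coefficients from different subfields, the support of column $j$ is governed by the largest subfield occurring among the nonzero $a_s$ and $\alpha_j$, and your column-by-column vanishing argument collapses; the actual construction has to couple the placement of the evaluation points with the coefficient restrictions much more carefully (this is exactly where the hypotheses $\gamma_{t_i+1}\geq t_{i+1}$, $\gamma_{n-\delta+2}\geq t_1$, and $\delta>n-t_1+1$, which guarantees there are at most $t_1$ coefficients, enter). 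Second, the two items you yourself flag as the ``technical heart''---the choice of the assignment $s\mapsto i(s)$ and the verification that $\sum_s t_{i(s)}$ equals $v_{\min}(\mathcal{F},\delta)$ from Lemma \ref{mrlem1}, including identifying which $\nu_i$ attains the minimum under the stated column-height hypotheses---are precisely the content of the theorem and are left undone. What you have is a reasonable strategy outline consistent with the known proof, not an argument that establishes the lemma; to complete it you must repair the support claim (tie each evaluation point to a subfield and control the compositum when coefficients are mixed) and then carry out the dimension bookkeeping explicitly.
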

\begin{lem}\label{ratlem}\emph{\cite[Theorem 38]{pratihar2023constructions}}
Let $(\mu_{1},\dots, \mu_{p} ) $ be a $p$-tuple of positive integers with $p\geq 1$ and set $\mu =\max_{1\leq i \leq p}\ \mu_{i}.$ For any set of $p$ positive integers $\{k_{1},\dots ,k_{p}\} $ such that $k_{1}< k_{2}< \dots < k_{p}$, consider the $m\times n$ Ferrers diagram
$\mathcal{F}=\big\{\overset{\mu_{1} times}{\overbrace{r_{1}, \cdots,  r_{1}}}\
\overset{\mu_{2} times}{\overbrace{r_{2}, \cdots,  r_{2}}}\
,\cdots,\overset{\mu_{p} times}{\overbrace{r_{p}, \cdots,  r_{p}}}\big\}, $
where $r_{i}=k_{i}\mu $ for $1\leq i \leq p$, $m=r_{p}$ and $n=\sum_{i=1}^{p}\mu_{i}.$ Then for any $d$ with $1\leq d \leq n$, there exists an $m\times n$ optimal $\left[ \mathcal{F},d \right]_{q} $ code for any prime power $q> k_{p}$.
\end{lem}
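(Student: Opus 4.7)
The plan follows the standard two-step template for optimal Ferrers diagram rank metric codes: first establish the dimension upper bound $v_{\min}(\mathcal{F}, d)$ from Lemma \ref{mrlem1} for this specific diagram, then exhibit an explicit $[\mathcal{F}, d]_q$ code attaining it.

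I would begin by unpacking the block structure of $\mathcal{F}$. The diagram is a staircase built from $p$ rectangular blocks: the $i$th block contributes $\mu_i$ consecutive columns, each of height $r_i = k_i \mu$, with $r_1 < r_2 < \cdots < r_p = m$ and $n = \sum_{i=1}^{p}\mu_i$. Removing the topmost $i$ rows together with the rightmost $d-1-i$ columns excises a predictable union of block-strips, so I would write $\nu_i$ explicitly in terms of the $\mu_j$, $k_j$, $\mu$, $d$, and $i$, then minimize over $0 \leq i \leq d-1$. The resulting piecewise-linear formula is the target dimension.

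For the construction, the natural strategy is to use evaluations of $\mathbb{F}_q$-linearized polynomials, with the block structure of $\mathcal{F}$ matched by a coordinated block structure on the evaluation points. Let $\alpha_1, \ldots, \alpha_{k_p}$ be $k_p$ pairwise distinct nonzero elements of $\mathbb{F}_q$, available precisely because $q > k_p$. Using the proportionality $r_i = k_i \mu$, one can attach to each column of $\mathcal{F}$ an evaluation point in $\mathbb{F}_{q^m}$ drawn from a system indexed by the $\alpha_i$, so that the dots of $\mathcal{F}$ correspond to coordinates compatible with a $q$-linearized evaluation map. The codewords are then evaluations of polynomials $f(x) = \sum_{s} a_s x^{q^s}$ whose coefficients lie in a prescribed $\mathbb{F}_q$-subspace of $\mathbb{F}_{q^m}$, chosen so that $f$ automatically vanishes on the rows missing from $\mathcal{F}$, while the remaining admissible coefficient space has $\mathbb{F}_q$-dimension equal to $v_{\min}(\mathcal{F}, d)$.

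The main obstacle will be simultaneously verifying that (i) every nonzero codeword has rank at least $d$, and (ii) the prescribed coefficient space has the predicted dimension. For (i), I would invoke the classical root-count for $q$-linearized polynomials over $\mathbb{F}_{q^m}$, leveraging the $k_p$ distinct $\alpha_i$ to separate the block-levels; this is the step in which the hypothesis $q > k_p$ is truly essential, because it is what allows the block-indexed scalars to act as independent coordinates. For (ii), I would count the $\mathbb{F}_q$-linear vanishing constraints imposed by the empty entries one block at a time and check that the codimension they produce matches the number of dots removed when passing from the ambient Gabidulin-type MRD code to the Ferrers-restricted subcode, so that the resulting dimension coincides exactly with the upper bound from Lemma \ref{mrlem1}.
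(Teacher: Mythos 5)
Before assessing the argument itself, note that the paper contains no proof of this statement: Lemma~\ref{ratlem} is quoted verbatim from \cite[Theorem 38]{pratihar2023constructions} and is used purely as a black box, so there is no internal proof to compare your write-up against, and reproving the result from scratch is a substantially larger undertaking than anything this paper attempts.

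On the merits, your two-step template (compute $v_{\min}(\mathcal F,d)$ via Lemma~\ref{mrlem1}, then exhibit a code attaining it) is the right shape, and the bound computation in your first step is indeed routine. The gap is that the second step, which is where the entire content of the theorem lives, is deferred rather than carried out. Restricting a Gabidulin-type evaluation code to matrices supported on $\mathcal F$ imposes $\mathbb F_q$-linear constraints that in general break the clean interplay between $q$-degree, root count, rank and dimension; the cases where this works easily are exactly those already covered by Lemmas~\ref{firstcons}, \ref{stcons} and \ref{gecons} (full rightmost columns, or column heights with divisibility conditions $t_1\mid t_2\mid\cdots$). Here the heights $k_i\mu$ satisfy no such divisibility, which is precisely why a new construction with the unusual hypothesis $q>k_p$ is required. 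Your sketch never specifies the prescribed coefficient subspace, does not show that the resulting codewords vanish outside $\mathcal F$, does not prove that every nonzero codeword has rank at least $d$ (the appeal to ``the classical root count'' together with the $k_p$ distinct scalars ``separating block-levels'' is not an argument: the root count controls rank via the $q$-degree alone, and it is not explained how elements of $\mathbb F_q$ enter that bound), and does not verify that the codimension of the support constraints equals, rather than merely exceeds, the number of dots removed --- which is what optimality requires. Until these four points are supplied, what you have is a plausible plan rather than a proof, and in the context of this paper the appropriate course is to cite \cite{pratihar2023constructions}, as the authors do.
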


Finally, we need the following result.
\begin{lem}\label{comcons}\emph{\cite[Theorem 9]{etzion2016optimal}}
Let $\mathcal{F}_{1}$ be an $m_{1}\times n_{1}$ Ferrers diagram, $\mathcal{F}_{2}$ be an $m_{2}\times n_{2}$ Ferrers diagram, and $\mathcal{D}$ be an $m_{3}\times n_{3}$ full Ferrers diagram, where $m_{3}\geq m_{1}$ and $n_{3}\geq n_{2}$. Let

\begin{center}
	$\mathcal{F} = \begin{pmatrix}
		\mathcal{F}_{1}&\mathcal{D}\\
		&\mathcal{F}_{2}
	\end{pmatrix}$
\end{center}
be an $m\times n$ Ferrers diagram, where $m=m_{2}+m_{3}$ and $n=n_{1}+n_{3}$.
If there exists an $\left[ \mathcal{F}_{1},k,\delta_{1} \right]_{q} $ code and an $\left[ \mathcal{F}_{2},k,\delta_{2} \right]_{q} $ code, then there exists an $\left[ \mathcal{F},k,\delta_{1}+\delta_{2} \right]_{q} $ code.
\end{lem}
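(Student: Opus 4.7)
The plan is to build the desired $[\mathcal{F},k,\delta_1+\delta_2]_q$ code directly by placing codewords of $\mathcal{C}_1$ (an $[\mathcal{F}_1,k,\delta_1]_q$ code) and $\mathcal{C}_2$ (an $[\mathcal{F}_2,k,\delta_2]_q$ code) into the $\mathcal{F}_1$ and $\mathcal{F}_2$ blocks of $\mathcal{F}$ while leaving the $\mathcal{D}$ block identically zero. The whole argument hinges on one geometric observation: inside $\mathcal{F}$, the nonzero positions of $\mathcal{F}_1$ sit in rows $1,\ldots,m_1$ and columns $1,\ldots,n_1$, whereas the nonzero positions of $\mathcal{F}_2$ sit in rows $m_3+1,\ldots,m$ and columns $n-n_2+1,\ldots,n$. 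Since $m_3\ge m_1$ and $n_3\ge n_2$, these two blocks share no row index and no column index, so any matrix obtained by independently filling them will have rank equal to the sum of the two block ranks.

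The concrete construction goes as follows. Fix an $\mathbb{F}_q$-linear isomorphism $\phi\colon \mathcal{C}_1\to\mathcal{C}_2$, which exists because both codes are $k$-dimensional. For each $A\in\mathcal{C}_1$ define $M(A)\in\mathbb{F}_q^{m\times n}$ by copying the entries of $A$ into the $\mathcal{F}_1$-positions, copying the entries of $\phi(A)$ into the $\mathcal{F}_2$-positions, and setting every remaining entry---including all of the $\mathcal{D}$-block---equal to zero. Let $\mathcal{C}=\{M(A)\mid A\in\mathcal{C}_1\}$. The map $A\mapsto M(A)$ is $\mathbb{F}_q$-linear and injective (one recovers $A$ by reading off the $\mathcal{F}_1$-positions), so $\mathcal{C}$ is a $k$-dimensional $\mathbb{F}_q$-linear subspace of $\mathbb{F}_q^{m\times n}$, supported on $\mathcal{F}$ by construction.

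For the distance bound, take distinct $A,A'\in\mathcal{C}_1$ and set $A_0=A-A'\ne 0$. By linearity, $M(A)-M(A')=M(A_0)$ has $A_0$ in its $\mathcal{F}_1$-block and $\phi(A_0)$ in its $\mathcal{F}_2$-block, and zeros elsewhere. Because these two nonzero blocks lie in disjoint row-ranges and disjoint column-ranges, we get
\[
{\rm rank}\bigl(M(A_0)\bigr)={\rm rank}(A_0)+{\rm rank}\bigl(\phi(A_0)\bigr)\ge \delta_1+\delta_2,
\]
using $A_0\in\mathcal{C}_1\setminus\{0\}$ and $\phi(A_0)\in\mathcal{C}_2\setminus\{0\}$ (where the latter relies on $\phi$ being an isomorphism). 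This gives $d_R(M(A),M(A'))\ge\delta_1+\delta_2$, as required.

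The only step that requires care is checking the geometric claim of the first paragraph, namely that the embeddings of $\mathcal{F}_1$ and $\mathcal{F}_2$ into $\mathcal{F}$ are separated in both rows and columns by the $\mathcal{D}$-strip; this is where the hypotheses $m_3\ge m_1$ and $n_3\ge n_2$ are essential, together with the convention that in a Ferrers diagram the rightmost column is full. Once this is in place, the rank-additivity of block-diagonally supported matrices is routine, and the bounds on $\dim\mathcal{C}$ and the minimum distance are immediate.
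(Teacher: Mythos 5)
Your construction is correct and is essentially the standard argument for this result (which the paper only cites, from Etzion et al.): pair the two codes via a linear bijection, place the paired codewords in the $\mathcal{F}_1$- and $\mathcal{F}_2$-blocks with the $\mathcal{D}$-block zero, and use that the two blocks occupy disjoint rows and columns (guaranteed by $m_3\geq m_1$, $n_3\geq n_2$) so ranks add. No gaps; the dimension and distance bounds follow exactly as you state.
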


\subsection{Multilevel construction}

Let $U$ be a $k$-dimensional subspace of $\mathbb{F}_{q}^{n}$.  Then $U$ can be represented by a $k \times n$ generator matrix whose $k$ rows form a basis of $U$. Based on Gaussian elimination, there exists a unique generator matrix $EF\left( U\right) $  of $U$ in reduced row echelon form (RREF).

\begin{Def}\label{mrDef1}\emph{\cite{etzion2009error}}
Let $U\in \mathcal{G}_{q}\left( n,k\right) $ and $EF\left( U\right) $ be the RREF of $U$. The identifying vector $\nu \left( U\right) $ of $U$ is a binary row vector of length $n$ and weight $k$, where the $k$ ones in  $\nu \left( U\right) $ are in the positions  where $EF\left( U\right) $ has the leading ones.
\end{Def}

\begin{exe} Let  $U$ be  a $3$-dimensional in $\mathbb{F}_{2}^{5}$ with the generator matrix in RREF
\begin{center}
	$EF\left( U\right)=\begin{pmatrix}
		1&1&0&0&0\\
		0&0&1&0&1\\
		0&0&0&1&0
	\end{pmatrix}$
\end{center}
\noindent The columns of $EF\left( U\right) $  that have the leading ones are columns $1$, $3$ and $4$. Thus the  identifying vector of $U$ is $\nu \left( U\right)=\left( 10110\right)  $.
\end{exe}

The Ferrers tableaux form  $\mathcal{F}\left( U\right) $ of $U$ is obtained by first removing all $zeros$ to the left of the leading coefficient of each row of $EF\left( U\right) $,  then removing the columns that contain the leading coefficients,  and finally moving all the remaining entries to the right.  The Ferrers diagram  $\mathcal{F}_{U}$ of $U$ is obtained by replacing entries of $\mathcal{F}\left( U\right) $ with $\bullet$.

The following two lemmas play a crucial role in the multilevel construction.

\begin{lem}\label{mrlem2.10}\emph{\cite[Lemma 2]{etzion2009error}}
Let $U,V\in \mathcal{P}_{q}\left( n\right)$. Suppose that $\nu(U)$ and $\nu(V)$ is  the identifying vector of $U$ and $V$ respectively.  Then
$d_{S}\left( U,V\right) \geq d_{H}\left(\nu \left( U\right),\nu \left( V\right)\right)  $
\end{lem}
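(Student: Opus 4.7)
The plan is to translate both sides of the inequality into expressions involving a common intermediate quantity, namely the set of pivot positions (the support of the identifying vector), and then to reduce everything to a single set-theoretic claim about pivots of a subspace of $U\cap V$.

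First I would rewrite each distance in symmetric-difference form. Writing $P(W)\subseteq\{1,\dots,n\}$ for the pivot set of a subspace $W$ (i.e.\ the support of $\nu(W)$), the weight of $\nu(W)$ is exactly $\dim W$, so
\[
d_H(\nu(U),\nu(V)) \;=\; |P(U)|+|P(V)|-2|P(U)\cap P(V)| \;=\; \dim U+\dim V-2|P(U)\cap P(V)|.
\]
Comparing with $d_S(U,V)=\dim U+\dim V-2\dim(U\cap V)$, the lemma reduces to the single inequality
\[
\dim(U\cap V)\ \le\ |P(U)\cap P(V)|.
\]

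The key observation I would isolate as a little sub-lemma is this: for any subspace $W\subseteq U$ one has $P(W)\subseteq P(U)$. I would prove this by taking any nonzero $w\in U$ with leading nonzero coordinate in position $j$, expanding $w=\sum_i\alpha_i r_i$ with respect to the rows $r_i$ of $EF(U)$ (whose leading ones sit in the columns indexed by $P(U)$), and running through the columns $j'<j$ in increasing order: because the pivot columns of $EF(U)$ are standard basis columns on the pivot rows, one successively forces $\alpha_i=0$ for every row whose pivot lies strictly before $j$, after which $w_j\ne 0$ can hold only if $j$ is itself a pivot of $U$. Applied row-by-row to $EF(W)$, this gives $P(W)\subseteq P(U)$.

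Applying this sub-lemma to the subspace $W=U\cap V$, viewed inside $U$ and inside $V$ respectively, yields $P(U\cap V)\subseteq P(U)\cap P(V)$, so $\dim(U\cap V)=|P(U\cap V)|\le |P(U)\cap P(V)|$, which is what we needed. The only non-routine step is the sub-lemma on pivots of subspaces; everything else is a straightforward rewriting of the two distances. I do not expect any obstruction beyond bookkeeping the RREF structure carefully.
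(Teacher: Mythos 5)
Your proof is correct: the reduction of the inequality to $\dim(U\cap V)\le|P(U)\cap P(V)|$, together with the sub-lemma that pivots of a subspace $W\subseteq U$ are pivots of $U$ (every nonzero $w\in U$ has its leftmost nonzero coordinate in a pivot column of $EF(U)$), is exactly the standard argument. The paper itself does not prove this lemma but quotes it from the cited reference of Etzion and Silberstein, whose proof proceeds along essentially the same lines, so there is nothing to add.
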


\begin{lem}\label{mrlem2.11}\emph{\cite{etzion2009error}}
For $U,V\in \mathcal{P}_{q}\left( n\right)$, if $\nu \left( U\right)=\nu \left( V\right) $, then $d_{S}\left( U , V\right) = 2d_{R}\left( EF\left( U \right) ,EF\left( V\right) \right) $.
\end{lem}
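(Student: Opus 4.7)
The plan is to express the subspace distance in terms of $\dim(U+V)$ and then compute the latter by exploiting the shared pivot structure of $EF(U)$ and $EF(V)$. Write $k=\dim U=\dim V$, which is the common weight of $\nu(U)=\nu(V)$. Because the identifying vectors coincide, the leading ones of $EF(U)$ and $EF(V)$ lie in exactly the same $k$ columns — call them the pivot columns — and the $k\times k$ submatrix obtained by restricting either $EF(U)$ or $EF(V)$ to these columns is the identity matrix $I_{k}$. Consequently the difference $M := EF(U)-EF(V)$ has only zero entries in every pivot column.

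The next step is to compute $\dim(U+V)$ as the rank of the $2k\times n$ block matrix $\begin{pmatrix} EF(U)\\ EF(V)\end{pmatrix}$, whose row space is $U+V$ by construction. Elementary row operations transform it into $\begin{pmatrix} EF(U)\\ M\end{pmatrix}$, so the two matrices have the same rank. The rows of $EF(U)$ span $U$, a $k$-dimensional space, and I claim that $U$ meets the row space of $M$ trivially: any nonzero element of $U$ is a nonzero $\mathbb{F}_{q}$-combination of the rows of $EF(U)$, and by the RREF structure those combining coefficients appear verbatim in the entries of the pivot columns, so the element has a nonzero entry in some pivot column; on the other hand, every row of $M$ (and hence every vector in its row space) vanishes on all pivot columns. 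Therefore the sum is direct and $\dim(U+V)=k+\mathrm{rank}(M)=k+d_{R}(EF(U),EF(V))$.

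Finally, combining $d_{S}(U,V)=2k-2\dim(U\cap V)$ with $\dim(U\cap V)=2k-\dim(U+V)$ gives $d_{S}(U,V)=2(\dim(U+V)-k)=2\,d_{R}(EF(U),EF(V))$, which is the claimed identity. The only mildly delicate point is the trivial-intersection argument in the second paragraph, which really uses uniqueness of the RREF and the observation that a nontrivial linear combination of the rows of a matrix in RREF is forced to be nonzero in at least one pivot column; everything else is routine linear-algebra bookkeeping.
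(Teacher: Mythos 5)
Your proof is correct. The paper states this lemma only as a citation to Etzion--Silberstein without reproducing an argument, and what you give -- computing $\dim(U+V)$ as the rank of the stacked matrix, observing that $EF(U)-EF(V)$ vanishes on the common pivot columns so that $\dim(U+V)=k+\mathrm{rank}\bigl(EF(U)-EF(V)\bigr)$, and then converting to the subspace distance -- is essentially the standard proof from that source, with the trivial-intersection step handled correctly via the identity submatrix on the pivot columns.
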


{\it The procedure of the  multilevel construction} \cite{etzion2009error}:
first  choose an $(n,2\delta,k)_2$ constant weight code $C$; then for each $c\in C$ with  an Ferrers diagram $\mathcal F_c$, construct an  $[\mathcal F_c,\rho_c,\delta]_q$  FDRMC $\mathcal C_c$ and then an $(n,q^{\rho_c},2\delta,k)_q$ lifted FDRMC $\mathscr C_c$;  finally, the union
$\bigcup\limits_{c\in C}\mathscr C_c$
is an $\Big( n, \sum\limits_{c\in C}q^{\rho_c}, 2\delta , k \Big) _{q}$-CDC.

\medskip
Lemma \ref{mrlem2.10} gives  a sufficient condition for finding identifying vectors in the multilevel construction. For some pairs of identifying vectors with Hamming distance $\delta-\delta_{1}$,  we need to use appropriate lifted Ferrers diagram rank-metric codes to ensure that the final subspace distance of the code will be $\delta$. This was done in {\cite{silberstein2015subspace}}  by introducing the notation of  quasi-pending blocks in an Ferrers diagram, which generalized the pending dots defined in \cite{trautmann2010new}.

\begin{Def}\label{mrDef2.14}\emph{\cite[Definition 14]{silberstein2015subspace}}
Let $\mathcal{F}$ be an $m\times n$ Ferrers diagram and $n_{1}< n$ , $m_{1}< m$. If the $\left( m_{1}+1\right) $st row of $\mathcal{F}$ has less dots than the $m_{1}$th row of $\mathcal{F}$ and at most $n-n_{1}$ dots, then the $n_{1}$ leftmost  columns of  $\mathcal{F}$ are called a quasi-pending block of size $m_{1}\times n_{1}$.
\end{Def}	

\begin{lem}\label{lempendingblock}\emph{\cite[Theorem 15]{silberstein2015subspace}}
For $U,V\in \mathcal{G}_{q}\left( k, n\right)$, such that  $d_{H}\left( \nu \left( U\right),\nu \left( V\right) \right) =2\delta$ and $\mathcal{F}_{U},\mathcal{F}_{V}$ have a quasi-pending block of size $m_{1}\times n_{1}$ in the same position. Denote by $B_{U}$ and $B_{V}$ the submatrix of $\mathcal{F}_{U}$ and $\mathcal{F}_{V}$ corresponding to the quasi-pending block, respectively. Then $d_{S}\left( U , V\right) \geq 2\delta +2rank\left( B_{U}-B_{V}\right) $.

\end{lem}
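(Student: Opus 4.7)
The overall strategy is to translate the distance inequality into a rank inequality. Since both subspaces are $k$-dimensional, one has $d_S(U,V) = 2(\dim(U+V) - k)$, so it suffices to prove $\dim(U+V) \geq k + \delta + \mathrm{rank}(B_U - B_V)$. I would stack the reduced row echelon generators into the $2k \times n$ matrix $M = \left(\begin{smallmatrix} EF(U) \\ EF(V) \end{smallmatrix}\right)$, so that $\dim(U+V) = \mathrm{rank}(M)$, and then argue that $M$ has $k + \delta + \mathrm{rank}(B_U - B_V)$ linearly independent rows in its row span.

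The first $k + \delta$ such rows are exactly the ones produced in the proof of Lemma \ref{mrlem2.10}: the $k + \delta$ columns in $\mathrm{supp}(\nu(U)) \cup \mathrm{supp}(\nu(V))$ are pivot columns of $M$, since each leading position of $EF(U)$ or $EF(V)$ contributes a row whose leading nonzero entry lies in its own column and in no other. The substantive work is to produce $\mathrm{rank}(B_U - B_V)$ additional rows in the row span of $M$ that vanish on all of those $k + \delta$ pivot columns.

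For this, the same-position pending block hypothesis forces the first $m_1$ rows of $EF(U)$ and of $EF(V)$ to share the same $m_1$ leading columns and the pending block to correspond to the same $n_1$ columns of $EF(U)$ and $EF(V)$; moreover, by the definition of a quasi-pending block, every row of index greater than $m_1$ in either $EF(U)$ or $EF(V)$ has its leading one strictly to the right of all pending block columns, and therefore vanishes on them. For $i = 1, \ldots, m_1$ I would then form $r_i := EF(U)_i - EF(V)_i$ (a row of $M$'s row space), which already vanishes on the $m_1$ shared leading columns, and eliminate its entries at the remaining $2\delta$ leading columns by subtracting appropriate multiples of rows $m_1+1, \ldots, k$ of $EF(U)$ and $EF(V)$. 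By the previous remark this elimination does not touch the pending block columns, so the reduced row $\tilde r_i$ still carries the $i$-th row of $B_U - B_V$ there and now vanishes on all $k + \delta$ pivot columns.

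The rows $\tilde r_1, \ldots, \tilde r_{m_1}$ therefore lie outside the span of the $k + \delta$ pivot rows, and their projection to the pending block columns spans the row space of $B_U - B_V$, so they contribute at least $\mathrm{rank}(B_U - B_V)$ further independent directions. Adding these dimensions yields $\mathrm{rank}(M) \geq k+\delta+\mathrm{rank}(B_U-B_V)$ and hence the required inequality. The main obstacle I foresee is the bookkeeping needed to identify which actual columns of $EF(U)$ and $EF(V)$ realize the Ferrers pending block and to verify that they coincide for both $U$ and $V$ and are disjoint from every leading column; once that identification is nailed down, the linear-algebra core of the argument above is straightforward.
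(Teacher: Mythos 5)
The paper itself does not prove this lemma: it is quoted from Theorem 15 of \cite{silberstein2015subspace}, so your argument can only be measured against that source, and in substance it is the same RREF/rank argument used there. Your core is correct: with $M=\left(\begin{smallmatrix}EF(U)\\ EF(V)\end{smallmatrix}\right)$ one has $d_S(U,V)=2(\mathrm{rank}\, M-k)$; the $k+\delta$ distinct leading columns of rows of $M$ give $k+\delta$ independent rows; and, granting that the first $m_1$ pivots of $EF(U)$ and $EF(V)$ sit in the same ambient columns and the block occupies the same $n_1$ ambient columns for both, your reduction works. Indeed $r_i=EF(U)_i-EF(V)_i$ already vanishes at every shared pivot column (not just the first $m_1$), each of the $2\delta$ unshared pivot columns is the leading column of a row of index $>m_1$ (every pivot other than the common first $m_1$ lies to their right), and those rows vanish both on all shared pivot columns and, by the quasi-pending-block condition, on the block columns; hence a left-to-right elimination produces $\tilde r_i$ vanishing on all $k+\delta$ leading columns while restricting on the block columns to the $i$th row of $B_U-B_V$, and the final independence count is the standard leading-entry argument. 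So the linear-algebra core is sound and complete.

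The one genuine weak point is the sentence claiming that the hypothesis \emph{forces} the coincidence of the first $m_1$ leading columns and of the ambient block columns. If ``in the same position'' is read purely at the level of the Ferrers diagrams (each diagram has an $m_1\times n_1$ quasi-pending block in its top-left corner), this claim is false, and so is the lemma itself: take $EF(U)=\left(\begin{smallmatrix}1&1&0&0&0\\0&0&1&0&0\end{smallmatrix}\right)$ and $EF(V)=\left(\begin{smallmatrix}0&1&0&0&0\\0&0&0&1&0\end{smallmatrix}\right)$; each diagram has a $1\times 1$ quasi-pending block in its leftmost column, $d_H(\nu(U),\nu(V))=4$ and $B_U-B_V=(1)$, yet $d_S(U,V)=4<4+2$. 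Consequently the hypothesis must be understood, as in \cite{silberstein2015subspace} and as it is used in Section 3 of this paper (all identifying vectors of one class share the same prefix, so their first $m_1$ pivot positions and block columns literally coincide), as requiring the blocks to occupy the same ambient columns with the same first $m_1$ pivot positions. With that reading your ``forcing'' step is simply part of the hypothesis and the rest of your proof goes through unchanged; you flagged exactly this bookkeeping as the main obstacle, so the fix is to make the hypothesis precise rather than to alter the argument.
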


At the end of this section, we present some definitions and known  results in graph theory.
Let $K_{m}$ be the complete graph with $m$ vertices. A matching of $K_{m}$ is a set of pairwise non-adjacent edges of $K_{m}$. A perfect (resp.nearly perfect) matching is a matching that covers all (resp.all but one) vertices of $K_{m}$. A one-factorization (resp. near one-factorization) of $K_{m}$ is a partition of all edges into perfect (resp. nearly perfect) matching of $K_{m}$. If one labels all vertices of $K_{m}$ with the number from $1 , 2 , \dots , m$, then one can easily see a one-to-one correspondence between the edges of graph and the vectors of weight 2 in $\mathbb{F}_{2}^{m}$ by assigning two ones in the coordinates labeled by the numbers of the two vertices in the graph which are connected by the corresponding edge.

The following lemma is a well-known result in graph theory.

\begin{lem}\label{lempar}\emph{\cite{van2001course}}
Let $D$ be the set of all binary vectors of length $m$ and weight $2$. 
\begin{itemize}[itemsep=0pt,topsep=0pt]
	\item[$(1)$] If $m$ is even, then $D$ can be partitioned into $m-1$ classes, each class has $\frac{m}{2}$ vectors with pairwise disjoint positions of $ones$;

	\item[$(2)$] If $m$ is odd, then $D$ can be partitioned into $m$ classes, each class has $\frac{m-1}{2}$ vectors with pairwise disjoint positions of $ones$.
	
\end{itemize}

\end{lem}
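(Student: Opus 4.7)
The plan is to invoke the natural bijection between weight-2 binary vectors of length $m$ and the edges of the complete graph $K_m$, already spelled out in the paragraph preceding the lemma: the vector whose only ones lie in positions $i$ and $j$ corresponds to the edge $\{i,j\}$. Under this bijection, a class of vectors with pairwise disjoint positions of ones is exactly a matching of $K_m$; a class of size $m/2$ is a perfect matching, and a class of size $(m-1)/2$ is a nearly perfect matching. Thus the lemma reduces to the existence of a one-factorization of $K_m$ when $m$ is even, and of a near one-factorization of $K_m$ when $m$ is odd.

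For part (1), with $m$ even, I would exhibit the classical round-robin one-factorization. Label the vertices $0,1,\dots,m-2$ together with a symbol $\infty$. For each $k\in\{0,1,\dots,m-2\}$, define the matching
\[
M_k=\bigl\{\{\infty,k\}\bigr\}\cup\bigl\{\{k+i,\,k-i\}\bmod (m-1)\ :\ 1\le i\le (m-2)/2\bigr\}.
\]
Each $M_k$ is a perfect matching with $m/2$ edges. A direct check using the fact that differences $\{(k+i)-(k-i)\}\bmod(m-1)=2i\bmod(m-1)$ range over all nonzero residues as $k$ varies shows that every edge of $K_m$ lies in exactly one $M_k$. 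Hence $\{M_k\}_{k=0}^{m-2}$ gives a partition of $E(K_m)$ into $m-1$ perfect matchings, which translates, via the bijection, into the desired partition of $D$ into $m-1$ classes each of size $m/2$.

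For part (2), with $m$ odd, I would use the analogous cyclic construction on vertex set $\{0,1,\dots,m-1\}$: for each $k\in\{0,1,\dots,m-1\}$, set
\[
M_k=\bigl\{\{k+i,\,k-i\}\bmod m\ :\ 1\le i\le (m-1)/2\bigr\}.
\]
Each $M_k$ is a nearly perfect matching missing only the vertex $k$, and again a difference argument shows every edge appears in exactly one $M_k$. This yields a partition of $E(K_m)$ into $m$ nearly perfect matchings of size $(m-1)/2$, and hence the corresponding partition of $D$.

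The only nontrivial point is verifying that the rotated matchings actually cover every edge exactly once; this is a standard counting argument (the total number of edges produced equals $\binom{m}{2}$, so coverage and disjointness are equivalent). Since the lemma is attributed to \cite{van2001course}, no further novelty is required, and the construction above suffices.
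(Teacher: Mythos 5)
Your construction is correct: the round-robin/cyclic matchings you describe are exactly the classical one-factorization of $K_m$ (even $m$) and near one-factorization (odd $m$), and the verification via the sum condition $a+b\equiv 2k$ (with $2$ invertible modulo the odd modulus) together with the edge count $\binom{m}{2}$ goes through. The paper itself offers no proof, citing van Lint--Wilson for this standard fact, so your argument simply supplies the textbook construction behind the citation and matches the intended justification.
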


\section{The constructions}
In this section, we focus on constructing  $((n+3)t,4t,3t)_q$  CDCs by the
multilevel construction for $n\geq 5$ and  $t$ is even. As applications, we provide  new lower bounds of $A_q(n,8,6)$  for $16\leq n\leq 19$ at the end of this section.

\subsection{Constructions of $\big((n+3)t,4t,3t\big)_q$ constant dimension codes}

Let $(n,t)\in\mathbb N^2$ with $n\geq 5$ and $t\geq 2$. Suppose that $t$ is even.  Let  $\nu=(\nu_1,\nu_2,\cdots,\nu_n)\in \mathbb F_2^n$. We define $w\in\mathbb F_2^{nt}$ as follows:
$$w_{i+(k-1)t}=\left\{\begin{array}{cc}
1,&\mbox{if}\ \nu_k=1;\\
0,&\mbox{if}\  \nu_k=0,
\end{array}\right.$$
where $1\leq i\leq t, 1\leq k\leq n.$ The new vector  $w$ is  called {\bf  the extension  of $\nu$ with  ratio $t$ } and is denoted by $E_t(\nu)$. For example, let
$\nu=(10 010)\in\mathbb F_2^5$, then the  extension of $\nu$ with  ratio $3$ is $w=(111\ 000\ 000\ 111\ 000).$

Set $s=n-1$ if $n$ is even and $s=n$ if $n$
is odd. By  Lemma \ref{lempar}, we can divide  the set of all binary vectors of length $n$ and weight $2$ into $s$ classes
$P_1,P_2,\cdots,P_s,$ where each class is an $(n,4,2)_2$ constant weight code.
Let $1\leq i\leq s$.
Define $Q_i=\{w\in \mathbb F_2^{nt}\mid w=E_t(\nu), \nu\in P_i\}.$ It is easy to see that $Q_i$ is an $(nt,4t,2t)_2$ constant weight code for $1\leq i\leq s$  such that $\bigcup\limits_{i=1}^sQ_i$ is an $(nt,2t,2t)_2$ constant weight code in $\mathbb F_2^{nt}.$ Now we can give our constructions. For convenience, set $a=\frac{t}{2}$.

{\bf  Identifying vectors:}
\begin{itemize}[itemsep=0pt,topsep=0pt]
\item {\it The first class:} Define $\mathcal A=\big\{(\underset{t}{\underbrace{1\cdots 1}}\ \underset{t}{\underbrace{1\cdots 1}}\ \underset{t}{\underbrace{1\cdots 1}}\    \|     \ \underset{nt}{\underbrace{0\cdots\cdots 0}}\ )\big\}.$
\item{\it The second class:}  For $1\leq i\leq s$, define $\mathcal B_i=\big\{\big(\underset{t}{\underbrace{1\cdots 1}}\ \underset{t}{\underbrace{0\cdots 0}}
\ \underset{t}{\underbrace{0\cdots 0}}\    \| \    w\ \big ) \mid w\in Q_i\big\}  $,\
set $\mathcal B=\bigcup\limits_{i=1}^s\mathcal B_i.$

\item{\it The third class:}
Let $\alpha=\big(\underset{a}{\underbrace{1\cdots 1}} \underset{a}{\underbrace{0\cdots 0}}\big)$, $  \beta=\big(\underset{a}{\underbrace{0\cdots 0}} \underset{a}{\underbrace{1\cdots 1}}\big)$, $\boldsymbol{0}=\big(\underset{(n-4)t}{\underbrace{0\cdots 0}} \big)$. Obviously, $d_{H}\left( \alpha , \beta \right) =t.$ We define the following  vectors:

$u_1=\big(\alpha \   \|  \   \alpha \   \|  \   \alpha \   \|  \   \alpha \   \|  \   \boldsymbol{0}\big)
$,
\ $u_2=\big(\beta \   \|  \   \beta \   \|  \   \alpha \   \|  \   \alpha \   \|  \   \boldsymbol{0}\big)
$,
\ $u_3=\big(\beta \   \|  \   \alpha \   \|  \   \beta \   \|  \   \alpha \   \|  \  \boldsymbol{0}\big)
$,
\ $u_4=\big(\beta \   \|  \   \alpha \   \|  \   \alpha \   \|  \   \beta \   \|  \   \boldsymbol{0}\big)
$,

$u_5=\big(\alpha \   \|  \   \beta \   \|  \   \alpha \   \|  \   \beta \   \|  \   \boldsymbol{0}\big)
$,
\ $u_6=\big(\alpha \   \|  \   \alpha \   \|  \   \beta \   \|  \   \beta \   \|  \   \boldsymbol{0}\big)
$,
\ $u_7=\big(\alpha \   \|  \   \beta \   \|  \   \beta \   \|  \   \alpha \   \|  \   \boldsymbol{0}\big)
$,
\ $u_8=\big(\beta \   \|  \   \beta \   \|  \   \beta \   \|  \   \beta \   \|  \   \boldsymbol{0}\big)
$.

\medskip
Set
$\mathcal C=\left\{\begin{array}{ll}
	\Big\{\big(\underset{t}{\underbrace{0\cdots 0}}\ \underset{t}{\underbrace{1\cdots 1}}\
	\underset{t}{\underbrace{0\cdots 0}}\ \   \|  \  \  u\ \big ) \mid u\in \{u_1,u_2,\cdots, u_8\}\Big\},&\mbox{ if} \  q^t\geq 8;\\
	\Big\{\big(\underset{t}{\underbrace{0\cdots 0}}\ \underset{t}{\underbrace{1\cdots 1}}\
	\underset{t}{\underbrace{0\cdots 0}}\ \   \|  \  \  u\ \big ) \mid u\in \{u_1,u_2,u_3,u_4\}\Big\},& \ \mbox{if } q^t=4.
\end{array}\right.$

\end{itemize}

\bigskip
{\bf   Ferrers tableaux forms and quasi-pending blocks:}

\begin{itemize}[itemsep=0pt,topsep=0pt]

\item	All the Ferrers diagrams which correspond to the identifying vectors from $\mathcal B$ have a common quasi-pending block in
the  leftmost $2t$ columns, which is a full diagram of size $t\times 2t.$  Let $\mathcal M$ be a $[t\times 2t,q^{2t},t]_{q}$ MRD code. We fill the same matrix  of $\mathcal M$ in the quasi-pending block for each vector in the same class. For the identifying vectors from different classes, we fill  different matrices of   $\mathcal M$. For this purpose,   we need the condition that $q^{2t}\geq s.$ For the remaining dots of Ferrers diagrams, we construct  FDRMCs with distance $2t$ and then lift them.

\item	All the Ferrers diagrams which correspond to the identifying vectors from $\mathcal C$ have a common quasi-pending block in
the leftmost $t$ columns, which is a full diagram of size $t\times t.$ Let $\mathcal N$ be a $[t\times t,q^{t},t]_{q}$ MRD code. We fill the same matrix of $\mathcal N$ in this block for the same  vector. For different  identifying vectors, we fill  different matrices of  $\mathcal N$. For this purpose,  we need the condition that $q^{t}\geq |\mathcal C|.$ For the remaining dots of Ferrers diagrams, we construct  FDRMCs with distance $2t$ and then  lift them.
\end{itemize}

{\bf Codes: }\  Let $\mathscr A$ be the lifted  $[3t\times nt,2t]_q$ MRD code constructed from the single vector in $\mathcal A$. Let $\mathscr B,\mathscr C$ be the union of the lifted codes corresponding to  $\mathcal B,\mathcal C$ respectively.  Set $\mathscr D=\mathscr A\cup \mathscr B\cup\mathscr C$. Then we get the following result.

\begin{theo} \label{theo1}The subspace  code $\mathscr D$   is an $\big((n+3)t,4t,3t\big)_{q}$ constant dimension code.
\end{theo}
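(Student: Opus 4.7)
The length $(n+3)t$ and dimension $3t$ of $\mathscr{D}$ are immediate from the construction, since every identifying vector in $\mathcal{A}\cup\mathcal{B}\cup\mathcal{C}$ is a binary vector of length $(n+3)t$ and weight $3t$. The substance of the theorem is the bound $d_S(U,V)\geq 4t$ for every two distinct codewords $U,V\in\mathscr{D}$. The plan is to split into cases according to which of $\mathscr{A},\mathscr{B},\mathscr{C}$ contain $U$ and $V$, and within each case according to whether $\nu(U)=\nu(V)$. The three tools are Lemma \ref{mrlem2.11} (for identical identifying vectors, via the rank distance of the FDRMC on the residual dots), Lemma \ref{mrlem2.10} (for pairs whose identifying vectors already have Hamming distance at least $4t$), and Lemma \ref{lempendingblock} combined with the MRD property of $\mathcal{M}$ and $\mathcal{N}$ (for pairs with Hamming distance $2t$ that share a quasi-pending block in matching coordinates).

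As a preliminary step I will verify the Ferrers-diagram picture claimed by the construction. For every $\nu\in\mathcal{B}$, rows $1,\dots,t$ of $\mathcal{F}_\nu$ each have length $nt$ while rows $t+1,\dots,3t$ are strictly shorter (length at most $(n-2)t$), so the leftmost $2t$ columns form a $t\times 2t$ full block that meets Definition \ref{mrDef2.14}. Analogously, for every $\nu\in\mathcal{C}$, rows $1,\dots,t$ have length $(n-1)t$ and the leftmost $t$ columns form a $t\times t$ full quasi-pending block. A small but crucial check is that these blocks always occupy the \emph{same} original coordinate positions (namely $t+1,\dots,3t$ in the $\mathcal{B}$-case and $2t+1,\dots,3t$ in the $\mathcal{C}$-case), independently of the choice of $w\in Q_i$ or $u_j$, so the inserted matrices from $\mathcal{M}$ or $\mathcal{N}$ can legitimately be compared in the rank metric of Lemma \ref{lempendingblock}.

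The case analysis then runs as follows. For $U,V$ with a common identifying vector (in $\mathscr{A}$, in a $\mathcal{B}_i$-fibre, or in a $\mathcal{C}$-fibre), Lemma \ref{mrlem2.11} combined with the zero contribution of the quasi-pending block and the distance-$2t$ FDRMC on the residual dots yields $d_S\geq 4t$. For distinct $w,w'\in Q_i$, the one-factorization of $K_n$ forces the two underlying weight-$2$ vectors to have disjoint supports, so extension by $t$ gives identifying-vector Hamming distance $4t$ and Lemma \ref{mrlem2.10} suffices. For $\mathcal{B}_i$ versus $\mathcal{B}_j$ with $i\neq j$ the Hamming distance is $2t$ or $4t$; when it is $2t$, Lemma \ref{lempendingblock} with ${\rm rank}(M_i-M_j)\geq t$ delivers the required additional $2t$. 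For two distinct identifying vectors in $\mathcal{C}$, the patterns $u_j$ correspond under $\alpha\mapsto 0,\beta\mapsto 1$ to even-weight vectors in $\mathbb{F}_2^4$ (respectively to a $[4,2,2]$ subcode in the case $q^t=4$), so any two differ in at least two of the four $\alpha/\beta$ blocks, giving identifying-vector Hamming distance at least $2t$; Lemma \ref{lempendingblock} with ${\rm rank}(N_j-N_k)\geq t$ again closes the gap. Finally, a short Hamming-distance computation disposes of the mixed cases: the three prefixes $(1^{3t})$, $(1^t0^{2t})$, $(0^t1^t0^t)$ are pairwise at distance $2t$, and a case split on whether the two support blocks of $w$ lie inside the first four coordinate blocks used by $u\in\mathcal{C}$ shows the suffix contribution is $2t$, $3t$, or $4t$, so the total identifying-vector Hamming distance is always at least $4t$ and Lemma \ref{mrlem2.10} applies.

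The main obstacle I anticipate is the bookkeeping in the two Hamming-distance-$2t$ subcases ($\mathcal{B}_i$--$\mathcal{B}_j$ and $\mathcal{C}$--$\mathcal{C}$): one must check carefully that the quasi-pending block sits in matching coordinate positions in both codewords so that Lemma \ref{lempendingblock} genuinely applies, and then convert the rank-$t$ MRD guarantee into the required extra subspace distance $2t$. The hypotheses $q^{2t}\geq s$ and $q^t\geq|\mathcal{C}|$ enter exactly here, to ensure that enough distinct matrices of $\mathcal{M}$ and $\mathcal{N}$ are available to assign different ones to different classes (respectively to different identifying vectors). Existence of the distance-$2t$ FDRMCs on the residual Ferrers diagrams is an explicit ingredient of the construction, to be supplied by the results recalled in Section~2, and is separate from the distance verification itself.
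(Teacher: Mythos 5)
Your proposal is correct and follows essentially the same route as the paper's own proof: a case analysis on the identifying-vector classes using Lemma \ref{mrlem2.11} for equal identifying vectors, Lemma \ref{mrlem2.10} when the Hamming distance is already $4t$, and Lemma \ref{lempendingblock} with the MRD-filled quasi-pending blocks in the distance-$2t$ cases, with your extra checks (block positions, mixed-case Hamming distances, the even-weight structure of the $u_j$) merely making explicit what the paper leaves implicit. The only negligible quibble is that for $q^t=4$ the set $\{u_1,\dots,u_4\}$ is not literally a linear $[4,2,2]$ subcode, but the pairwise-distance-at-least-two property you actually use does hold, so nothing is affected.
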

\begin{proof}
The proof is motivated by  \cite{silberstein2015subspace}. It suffices to check the distance.
Let $X$ and $Y$ be in $\mathscr D$.
\begin{itemize}[itemsep=0pt,topsep=0pt]
	\item 	If $X\in \mathscr A$ and $Y\in\mathscr A$, then $d_S(X,Y)\geq 4t$ since $\mathscr A$ is an $((n+3)t,4t,3t)_{q}$ CDC.
	\item If $X\in \mathscr A$ and $Y\in \mathscr B\cup \mathscr C$, then $d_S(X,Y)\geq 4t$ since $d_H(v(X),v(Y))\geq 4t$.
	\item  If $X\in \mathscr B$ and $Y\in\mathscr C$, then $d_S(X,Y)\geq 4t$ since $d_H(v(X),v(Y))\geq 4t$.
	\item Suppose that $(X,Y)\in \mathscr B^2$ such that $v(X)\in\mathcal B_i$ and $v(Y)\in \mathcal B_j$.
	
	If $v(X)=v(Y)$, then $d_S(X,Y)\geq 4t$ by construction.
	Now suppose that $v(X)\neq v(Y)$.
	
	If $i=j$, then $d_H(v(X),v(Y))\geq 4t$. It follows that  $d_S(X,Y)\geq 4t$ .
	
	If $i\neq j$, then $d_H(v(X),v(Y))\geq 2t$. Let $B_X$ and $B_Y$ be the matrix filled in the pending block corresponding to $v(X)$ and $V(Y)$, respectively. Then ${\rm rank}(B_X-B_Y)\geq t$ by construction. It follows from Lemma \ref{lempendingblock} that $d_S(X,Y)\geq 2{\rm rank}(B_X-B_Y)+d_H(v(X),v(Y))\geq 4t.$
	
	\item Suppose that $(X,Y)\in \mathscr C^2$.	If $v(X)=v(Y)$, then $d_S(X,Y)\geq 4t$ by construction.
	Now suppose that $v(X)\neq v(Y)$.
	Then $d_H(v(X),v(Y))\geq 2t$. Let $C_X$ and $C_Y$ be  the matrix filled in the quasi-pending block corresponding to $v(X)$ and $v(Y)$, respectively. Then ${\rm rank}(C_X-C_Y)\geq t$ by construction. It follows from Lemma \ref{lempendingblock} that $d_S(X,Y)\geq 2{\rm rank}(C_X-C_Y)+d_H(v(X),v(Y))\geq 4t.$
\end{itemize}

It follows that $d_S(X,Y)\geq 4t$ for any two different subspaces $X$ and $Y$ in $\mathscr D.$ Consequently, $\mathscr D$  is an $\big((n+3)t,4t,3t\big)_{q}$ constant dimension code.
\end{proof}

\subsection{Calculation of the cardinality of $\mathscr D$}
In this subsection, we calculate the cardinality of  the constant dimension code $\mathscr D$.
Obviously, $|\mathscr A|=q^{nt(t+1)}$ and $|\mathscr D|=|\mathscr A|+|\mathscr B|+|\mathscr C|$.

\subsubsection{Calculation of the cardinality  of $\mathscr B$}
Let $1\leq i<j\leq n$. For convenience,  we  denote by   $u_{i,j}$
the extension of the vector $(10 0\|u)$ with ratio $t$, where $u\in\mathbb F_2^n$ with sup$(u)= \{i,j\}$.
Denote by $\mathcal F_{i,j}$ the subdiagram of the  Ferrers diagram of $u_{i,j}$ by deleting the leftmost $2t$ columns and by $\mathcal C_{i,j}$ the FDRMC constructed from $\mathcal F_{i,j}$. It is easy to check that  $\mathcal F_{i,j}=[\underset{(i-1)t}{\underbrace{t,\cdots,t}},\underset{(j-i-1)t}{\underbrace{2t,\cdots,2t}},
\underset{(n-j)t}{\underbrace{3t\cdots,3t}}],$  which also can  be  represented   as follows:

\begin{center}
\vskip-0.5cm
\begin{tikzpicture}[scale=0.6]
	
	
	\draw[]  (0,2) rectangle (2,3);
	\draw[] (2,1) rectangle (4,3);
	\draw[] (4,0) rectangle (7,3);

	\draw [decorate,
	decoration = {calligraphic brace, 
		raise=2pt, 
		aspect=0.5, 
		amplitude=3pt 
	}] (0,3) --  (2,3);

	\draw [decorate,
	decoration = {calligraphic brace, 
		raise=2pt, 
		aspect=0.5, 
		amplitude=3pt 
	}] (2,3) --  (4,3);

	\draw [decorate,
	decoration = {calligraphic brace, 
		raise=2pt, 
		aspect=0.5, 
		amplitude=3pt 
	}] (4,3) --  (7,3);

	\draw [decorate,
	decoration = {calligraphic brace, 
		raise=2pt, 
		aspect=0.5, 
		amplitude=2pt 
	}] (0,2) --  (0,3);

	\draw [decorate,
	decoration = {calligraphic brace, 
		raise=2pt, 
		aspect=0.5, 
		amplitude=2pt 
	}] (2,1) --  (2,2);

	\draw [decorate,
	decoration = {calligraphic brace, 
		raise=2pt, 
		aspect=0.5, 
		amplitude=2pt 
	}] (4,0) --  (4,1);

	\node at  (1,3.7) {\tiny $(i-1)t$};
	
	\node at  (3,3.7) {\tiny $(j-i-1)t$};
	
	\node at  (5.5,3.7) {\tiny $(n-j)t$};

	\node at  (-0.5,2.5){\tiny $t$};
	
	\node at  (1.5,1.5) {\tiny $t$};
	
	\node at  (3.5,0.5) {\tiny $t$};

	\node at  (-1.8,2.5) {\small$\mathcal{F}_{i,j}=$};
\end{tikzpicture}

\end{center}

\noindent$\big(1\big)$ Case $1\leq i\leq n-4.$
\begin{itemize}
\item[(i)]	Let $2\leq j\leq n$ and   $\mathcal F'_{1,j}$ be the rightmost $(n-2)t$ columns of $\mathcal F_{1,j}$. It
is easy to check that  $v_{\min}(\mathcal F'_{1,j},2t)=v_{\min}(\mathcal F_{1,j},2t)=(n- 2)t+(n-j)t^2$.  Since
the topmost $2t-1$ rows of $\mathcal F'_{1,j}$ have $(n-2)t\geq 3t$ dots,  we get from Lemma \ref{firstcons} that there exists an optimal $[\mathcal F'_{1,\!j},{(n\!-\!2)t\!+\!(n\!-\!j)t^2},2t]_{q}$ FDRMC. Thus,  there exists an optimal $[\mathcal F_{1,\!j},{(n\!-\!2)t\!+\!(n\!-\!j)t^2},2t]_{q}$ FDRMC.
\item[(ii)]  Let  $2\leq i\leq n-4$ and $i<j\leq n.$ One can check that there exists an optimal $[\mathcal F_{i,j},{(n-i-1)t+(n-j)t^2},2t]_{q}$ FDRMC similarly.
\end{itemize}
\noindent$\big(2\big)$ Case $i=n-1.$
Since $\mathcal F_{n-1,n}=[\underset{(n-2)t}{\underbrace{t,\cdots,t}}]$, we know that $v_{\min}(\mathcal F_{n-1,n},2t)=0.$

\noindent$\big(3\big)$ Case $i=n-2.$
\begin{itemize}
\item[(i)]
Since $\mathcal F_{n-2,n-1}=[\underset{(n-3)t}{\underbrace{t,\cdots,t}},\underset{t}{\underbrace{3t,\cdots,3t}}]$, we know that $v_{\min}(\mathcal F_{n-2,n-1},2t)=2t.$ Let $ \mathcal F_{n-2,n-1}'$  be the rightmost $3t$ columns of  $\mathcal F_{n-2,n-1}$. Then $ \mathcal F_{n-2,n-1}'$ can be written as
$ \mathcal F_{n-2,n-1}'=\left(\begin{smallmatrix}
	\mathcal{F}'_{1}&\mathcal{F}'_{2}\\
	&\mathcal{F}'_{3}
\end{smallmatrix}\right)$, where both  $(\mathcal F'_{1})^{t}$ and $\mathcal F'_{3}$ are $2t\times t$ full Ferrers diagram and $\mathcal F'_{2}$ is a $t\times t$ full Ferrers diagram. Moreover, $v_{\min}(\mathcal F'_{n-2,n-1},2t)=v_{\min}(\mathcal F_{n-2,n-1},2t)=2t.$ By Lemma \ref{firstcons}, there exists an optimal $[(\mathcal F_1')^{t},2t,t]_{q}$ FDRMC and an  optimal $[\mathcal F'_{3},2t,t]_{q} $ FDRMC. It follows from Lemma \ref{comcons} that there exists an optimal $[\mathcal F'_{n-2,n-1},2t,2t]_{q}$ FDRMC. Thus,
there exists an optimal $[\mathcal F_{n-2,n-1},2t,2t]_{q}$ FDRMC.

\item[(ii)] Since $\mathcal F_{n-2,n}=[\underset{(n-3)t}{\underbrace{t,\cdots,t}},\underset{t}{\underbrace{2t,\cdots,2t}}]$, we know that $v_{\min}(\mathcal F_{n-2,n},2t)=t.$ Let $\mathcal F'_{n-2,n}$ be the rightmost $2t$ columns of  $\mathcal F_{n-2,n}$.
Then $v_{\min}(\mathcal F'_{n-2,n},2t)=v_{\min}(\mathcal F_{n-2,n},2t)=t.$
Similar to the proof of (i), we get  from Lemma \ref{comcons} that there exists an optimal $[\mathcal F'_{n-2,n},t,2t]_{q}$ FDRMC. Then
there exists an optimal $[\mathcal F_{n-2,n},t,2t]_{q}$ FDRMC.
\end{itemize}

\noindent$\big(4\big)$ Case $i=n-3$.
\begin{itemize}
\item[(i)]
Note that $\mathcal F_{n-3,n-2}=[\underset{(n-4)t}{\underbrace{t,\cdots,t}},\underset{2t}{\underbrace{3t,\cdots,3t}}]$.

\begin{itemize}
	\item[$\bullet $] If $n\geq 6,$ then we set   $\mathcal F'_{n-3,n-2}$
	be the rightmost $4t$ columns  of $\mathcal F_{n-3,n-2}$. It is easy to check that
	$v_{\min}(\mathcal F'_{n-3,n-2},2t)=v_{\min}(\mathcal F_{n-3,n-2},2t)=2t^2+2t.$
	It follows from Lemma \ref{gecons} that there exists an optimal $[\mathcal F_{n-3,n-2},2t^2+2t,2t]_{q}$ FDRMC.
	\item[$\bullet $] If $n=5$, then $\mathcal F_{n-3,n-2}=[\underset{t}{\underbrace{t,\cdots,t}},\underset{2t}{\underbrace{3t,\cdots,3t}}]$.
	It follows from Lemma \ref{firstcons} that there exists an optimal $[\mathcal F_{n-3,n-2},t^2+3t,2t]_{q}$ FDRMC.
\end{itemize}

\item[(ii)] Now we consider the case  $\mathcal F_{n-3,n-1}=[\underset{(n-4)t}{\underbrace{t,\cdots,t}},\underset{t}{\underbrace{2t,\cdots,2t}},\underset{t}{\underbrace{3t,\cdots,3t}}]$.

If $n\geq 6,$  then we set  $\mathcal F''_{n-3,n-1}=[\underset{2
	t}{\underbrace{t,\cdots,t}},\underset{t}{\underbrace{2t,\cdots,2t}},\underset{t}{\underbrace{3t,\cdots,3t}}]$
be the rightmost $4t$ columns  of $\mathcal F_{n-3,n-1}$. It is easy to check that
$v_{\min}(\mathcal F''_{n-3,n-1},2t)=v_{\min}(\mathcal F_{n-3,n-1},2t)=t^2+2t.$ Note that  the transpose of  $\mathcal F''_{n-3,n-1}$ is $[\underset{
	t}{\underbrace{t,\cdots,t}},\underset{t}{\underbrace{2t,\cdots,2t}},\underset{t}{\underbrace{4t,\cdots,4t}}]$.
It follows from Lemma \ref{gecons} that  there exists an optimal $[\mathcal F''_{n-3,n-1},t^2+2t,2t]_{q}$ FDRMC and thus exists an optimal $[\mathcal F_{n-3,n-1},t^2+2t,2t]_{q}$ FDRMC. 

If $n=5,$  then $\mathcal F_{n-3,n-1}=[\underset{
	t}{\underbrace{t,\cdots,t}},\underset{t}{\underbrace{2t,\cdots,2t}},\underset{t}{\underbrace{3t,\cdots,3t}}]$ and
$v_{\min}(\mathcal F_{n-3,n-1},2t)=t^2+2t.$

\begin{itemize}
	
	\item[$\bullet$]	If $t=2$, then $\mathcal F_{n-3,n-1}=[2,2,4,4,6,6]$. It follows from \cite[Example III.16]{antrobus2019maximal} that there exists
	an optimal $[\mathcal F_{n-3,n-1},8,4]_q$ FDRMC.
	
	\item [$\bullet$]	If $q\geq 4$, then it follows from Lemma \ref{ratlem} that there exists an  optimal $[\mathcal F_{n-3,n-1},t^2+2t,2t]_{q}$
	FDRMC.
	
	\item[$\bullet$] In general, we don't know whether there exits an optimal $[\mathcal F_{n-3,n-1},2t]_{q}$
	FDRMC. However, we  can construct  an
	$[\mathcal F_{n-3,n-1},2t]_{q}$
	FDRMC with smaller size.
	Let  $\mathcal{C}_{1}$ be a $[t\times t, t]_{q}$ MRD code and $\mathcal{C}_{2}$  a $[2t\times 2t, 2t]_{q}$ MRD code.  Define $\mathcal{C}=\left\lbrace \left[ \begin{smallmatrix}
		A&B\\
		&A
	\end{smallmatrix} \right] \mid  A\in \mathcal{C}_{1}, B \in \mathcal{C}_{2} \right\rbrace $.  Then $\mathcal{C} $ is an $[\mathcal F_{n-3,n-1},3t,2t]_{q}$ FDRMC.
\end{itemize}
\item[(iii)] Finally,  we consider the case  $\mathcal F_{n-3,n}=[\underset{(n-4)t}{\underbrace{t,\cdots,t}},\underset{2t}{\underbrace{2t,\cdots,2t}}]$.
Let $\mathcal F'_{n-3,n}$ be the rightmost $2t$ columns of $\mathcal F_{n-3,n}$. Then $v_{\min}(\mathcal F'_{n-3,n},2t)=v_{\min}
(\mathcal F_{n-3,n},2t)=2t$. It follows from Lemma \ref{firstcons} that there exists an optimal
$[\mathcal F_{n-3,n},2t,2t]_{q}$ FDRMC.
\end{itemize}
Consequently, we get the following result:
\begin{center}

$|\mathscr B|=\left\{\begin{array}{llll}
	\sum\limits_{i=1}^{n\!-\!4}\sum\limits_{j=i\!+\!1}^{n}q^{(n\!-\!i\!-\!1)t\!+\!(n\!-\!j)t^{2}}\!+\!q^{2t^{2}\!+\!2t}\!+\!q^{t^{2}\!+\!2t}\!+\!2q^{2t}\!+\!q^{t}\!+\!1,&\!\!\mbox{ if} \   n\geq6;\\
	q^{18}\!+\!q^{14}\!+\!2q^{10}\!+\!q^{8}\!+\!q^6\!+\!2q^{4}\!+\!q^{2}\!+\!1,&\!\! \ \mbox{if} \ n=5, t=2;\\
	\sum\limits_{j=2}^{5}q^{3t\!+\!(5\!-\!j)t^{2}}\!+\!q^{t^{2}\!+\!3t}\!+\!q^{t^{2}\!+\!2t}\!+\!2q^{2t}\!+\!q^{t}\!+\!1,&\!\! \ \mbox{if} \ n=5, t\neq2, q\geq 4;\\
	\sum\limits_{j=2}^{5}q^{3t\!+\!(5\!-\!j)t^{2}}\!+\!q^{t^{2}\!+\!3t}\!+q^{3t}\!+\!2q^{2t}\!+\!q^{t}\!+\!1,&\!\! \ \mbox{if} \ n=5, t\neq2, q< 4.\\
\end{array}\right.$
\end{center}

\subsubsection{Calculation of the cardinality  of $\mathscr C$}

Let $1\leq i \leq 8$ and $v_{i}=\big(\underset{t}{\underbrace{0\cdots 0}}\ \underset{t}{\underbrace{1\cdots 1}}\
\underset{t}{\underbrace{0\cdots 0}}\ \   \|  \  \  u_{i}\ \big )\in \mathcal{C}$. Denote by $\mathcal F_{i}$ the subdiagram of the  Ferrers diagram of $v_{i}$ by deleting the leftmost $t$ columns and by $\mathscr{C}_{i}$ the lifted  FDRMC corresponding to  $v_{i}$.

\medskip
\noindent$\big(1\big)$ Case $n\geq 6$.

\noindent
\medskip
Let $\mathcal F'_{1}=[\underset{a}{\underbrace{4a,\cdots,4a}},\underset{a}{\underbrace{5a,\cdots,5a}},\underset{2a(n-4)+a}{\underbrace{6a,\cdots,6a}}]$ be the subdiagram of $\mathcal F_{1}$ by deleting the leftmost $a$ columns. It is easy to check that  $v_{\min}(\mathcal F_{1},2t)=v_{\min}(\mathcal F'_{1},2t)$. Since the topmost $2t-\!1$ columns of $\mathcal{F}'_{1}$ have full dots, it follows from Lemma \ref{firstcons} that there exists an optimal $[\mathcal{F}'_{1},(t^{2}+t)n-\frac{13t^{2}}{4}-\frac{5t}{2},2t]_q$ FDRMC. Thus, there exits an optimal $[\mathcal{F}_{1},(t^{2}+t)n-\frac{13t^{2}}{4}-\frac{5t}{2},2t]_q$ FDRMC.  Similarly, one can check that there exists an optimal   $[\mathcal{F}_{i},2t]_q$ FDRMC for each $2\leq i\leq 8.$ By a direct calculation, we get the following result:
\begin{center}
Table 1

\medskip
\begin{tabular}{|cc|cc|cc|cc|}
	\hline
	${\rm vector}$ &  ${\rm size } $ &${\rm vector}$ &  ${\rm size}$ &${\rm vector}$ &  ${\rm size}$ & ${\rm vector}$ & ${\rm size}$  \\
	\hline
	
	$v_{1}$  & $q^{(t^{2}+t)n-\frac{13t^{2}}{4}-\frac{5t}{2}} $&$v_{2}$  & $q^{(t^{2}+t)n-\frac{13t^{2}}{4}-3t} $&$v_{3} $ &$q^{(t^{2}+t)n-\frac{7t^{2}}{2}-\frac{5t}{2}} $&$v_{4} $&  $q^{(t^{2}+t)n-\frac{7t^{2}}{2}-\frac{5t}{2}} $ \\
	\hline		
	$v_{5} $ &  $q^{(t^{2}+t)n-\frac{7t^{2}}{2}-3t}$&$v_{6}$&$q^{(t^{2}+t)n-\frac{15t^{2}}{4}-\frac{5t}{2}}$ &$v_{7}$ &
	$q^{(t^{2}+t)n-\frac{7t^{2}}{2}-3t}$&$v_{8}$&$q^{(t^{2}+t)n-\frac{15t^{2}}{4}-3t}$\\
	\hline
\end{tabular}
\end{center}

\noindent$\big(2\big)$ Case $n=5$ and $t=2$.
\begin{itemize}
\item[$\bullet$] Since $\mathcal{F}_{1}=[3,4,5,6,6,6]$,  we get from  Lemma \ref{firstcons} that there exists an optimal $[\mathcal F_{1},4]_q$ FDRMC, where $v_{\min}(\mathcal F_{1},4)=12$. Thus, $|\mathscr{C}_{1}|=q^{12}$. Similarly, we can construct an   optimal $[\mathcal F_{i},4]_q$ FDRMC and thus  a CDC  $\mathscr{C}_{i}$ with size $q^{10}$ for  each $ i\in\{2,3,5,6,7\}$.

\item[$\bullet$]Since $\mathcal{F}_{8}=[2,3,4,5,6,6]$,  we get from  Lemma \ref{stcons} that there exists an optimal $[\mathcal F_{8},4]_q$ FDRMC, where $v_{\min}(\mathcal F_{8},4)=9$. Thus, $|\mathscr{C}_{8}|=q^{9}$.

\item[$\bullet$]  Note that $\mathcal{F}_{4}=[2,4,5,5,6,6]$. Let $\mathcal{F}'_{4}=[4,5,5,5,5]$ be the subdiagram of $\mathcal{F}_{4}$ by deleting the leftmost column and last row. It follows from Lemma \ref{firstcons} that there exists an optimal $[\mathcal{F}'_{4},9,4]_q$ FDRMC. Then  there exists an $[\mathcal{F}_{4},9,4]_q$ FDRMC. Hence, $|\mathscr{C}_{4}|=q^{9}$.
\end{itemize}

\noindent$\big(3\big)$ Case $n=5$ and $t\neq2$.
\begin{itemize}
\item[$\bullet$]In this case, $\mathcal F_{1}=[\underset{a}{\underbrace{3a,\cdots,3a}},\underset{a}{\underbrace{4a,\cdots,4a}},\underset{a}{\underbrace{5a,\cdots,5a}},\underset{3a}{\underbrace{6a,\cdots,6a}}].$ Let $\mathcal{F}'_{1}=[\underset{a}{\underbrace{4a,\cdots,4a}},\underset{4a}{\underbrace{5a,\cdots,5a}}]$ be the subdiagram of $\mathcal{F}_{1}$ by deleting the leftmost $a$ columns and last $a$ rows. It follows from Lemma \ref{firstcons} that there exists an optimal $[\mathcal{F}'_{1},t^{2}+\frac{5t}{2},2t]_q$ FDRMC. Then  there exists an $[\mathcal{F}_{1},t^{2}+\frac{5t}{2},2t]_q$ FDRMC. Thus $|\mathscr{C}_{1}|=q^{t^{2}+\frac{5t}{2}}$. We  can construct $\mathscr{C}_{i}$ for  $2\leq i \leq 6$  similarly and get the following result:
\begin{center}
	Table 2
	
	\medskip
	\begin{tabular}{|cc|cc|cc|cc|cc|cc|}
		\hline
		${\rm vector}$ &  ${\rm size}$ &${\rm vector}$ &  ${\rm size}$ &${\rm vector}$ &  ${\rm size}$ &${\rm vector}$ &  ${\rm size}$ &${\rm vector}$ &  ${\rm size}$ & ${\rm vector}$ & ${size}$  \\
		\hline
		
		$v_{1}$  & $q^{t^{2}+\frac{5t}{2}} $&$v_{2}$  & $q^{\frac{3t^{2}}{4}+\frac{5t}{2}} $&$v_{3} $ &$q^{\frac{3t^{2}}{4}+\frac{5t}{2}} $&$v_{4} $&  $q^{t^{2}+\frac{5t}{2}} $  &$v_{5} $ &  $q^{\frac{3t^{2}}{4}+\frac{5t}{2}}$&$v_{6}$&$q^{\frac{3t^{2}}{4}+\frac{5t}{2}}$ \\
		\hline
	\end{tabular}
\end{center}
\item[$\bullet$] $\mathcal F_7$ can be written as $\mathcal{F}_{7}= \left(\begin{smallmatrix}
	\mathcal{F}'_{1}&\mathcal{F}'_{2}\\
	&\mathcal{F}'_{3}
\end{smallmatrix}\right),$ where  both $\mathcal F'_{1}$ and $(\mathcal F'_{3})^{t}$ are $3a\times 2a$ full Ferrers diagrams and $\mathcal F'_{2}$ is a $4a\times 4a$ full Ferrers diagram. Let $\mathcal{C}_{1}$ be a $[3a\times 2a, t]_{q}$ MRD code and $\mathcal{C}_{2}$ a $[4a\times 4a, 2t]_{q}$ MRD code. Define  $\mathcal{C}_7=\left\lbrace  \left[\begin{smallmatrix}
	A&B\\
	&A^{t}
\end{smallmatrix} \right] \mid  A\in \mathcal{C}_{1}, B \in \mathcal{C}_{2} \right\rbrace $. Then $\mathcal{C}_7$   is an $[\mathcal{F}_{7},q^{\frac{7t}{2}},2t]_{q}$ FDRMC. Thus  $|\mathscr{C}_{7}|=q^{\frac{7t}{2}}$.
Similarly, we can construct   an $[\mathcal{F}_{8},3t,2t]_q$ FDRMC. Thus,  $|\mathscr{C}_{8}|=q^{3t}$.

\end{itemize}

Consequently, we get the following:
\begin{center}

$|\mathscr C|=\left\{\begin{array}{lllll}
	q^{6n\!-\!18}\!+\!3q^{6n\!-\!19}\!+\!3q^{6n\!-\!20}\!+\!q^{6n\!-\!21},&\!\!\!\mbox{if}  \ n\geq 6, t=2, q>2;\\
	q^{6n\!-\!18}\!+\!3q^{6n\!-\!19},&\!\!\!\mbox{if}  \ n\geq 6, t=2, q=2;\\
	N,&\!\!\!\!\!\mbox{ if} \   n\geq6, t\neq2;\\
	
	q^{12}\!+\!5q^{10}\!+\!2q^{9},&\!\!\!\mbox{if} \ n=5, t=2, q>2;\\
	q^{12}\!+\!3q^{10},&\!\!\!\mbox{if} \ n=5, t=2, q=2;\\
	2q^{t^{2}\!+\!\frac{5t}{2}}\!+\!4q^{\frac{3t^{2}}{4}\!+\!\frac{5t}{2}}\!+q^{\frac{7t}{2}}\!+\!q^{3t},&\!\!\! \mbox{if} \ n=5, t\neq2;\\
\end{array}\right.$
\end{center}
where  $N=q^{(t^{2}+t)n-\frac{13t^{2}}{4}-\frac{5t}{2}}+q^{(t^{2}+t)n-\frac{13t^{2}}{4}-3t}+2q^{(t^{2}+t)n-\frac{7t^{2}}{2}-\frac{5t}{2}}+2q^{(t^{2}+t)n-\frac{7t^{2}}{2}-3t}+q^{(t^{2}+t)n-\frac{15t^{2}}{4}-\frac{5t}{2}}+q^{(t^{2}+t)n-\frac{15t^{2}}{4}-3t}$.

Now  we can state our main result.

\begin{theo}\label{myth2}
Let $(n, t)\in
\mathbb{N}_{+}^{2}$  and $q$ be a prime power. Suppose that $t$ is even and $q^{2t} \geq s$, where $s=n-1$ for even $n$ (or $s=n$ for odd $n$).
\begin{itemize}
	\item[$(1)$]If $n\geq 6$, then $\overline{A}_{q}((n\!+\!3)t,4t,3t)\geq $
	\begin{center}
		$\left\{\begin{array}{lll}
			\hskip-0.2cm	q^{6n}\!+\!\sum\limits_{i=1}^{n\!-\!4}\sum\limits_{j=i\!+\!1}^{n}q^{2(3n\!-\!i\!-\!1\!-\!2j)}\!+\!q^{6n\!-\!18}\!+\!3q^{6n\!-\!19}\!+\!3q^{6n\!-\!20}\!+\!q^{6n\!-\!21}\!+\!q^{12}\!+\!q^{8}\!+\!2q^{4}\!+\!q^{2}\!+\!1,&\!\!\!\mbox{if}  \  t=2, q>2;\\
			\hskip-0.2cm	q^{6n}\!+\!\sum\limits_{i=1}^{n\!-\!4}\sum\limits_{j=i\!+\!1}^{n}q^{2(3n\!-\!i\!-1-\!2j)}\!+\!q^{6n\!-\!18}\!+\!3q^{6n\!-\!19}\!+\!q^{12}\!+\!q^{8}\!+\!2q^{4}\!+\!q^{2}\!+\!1,&\!\!\!\mbox{if}  \ t=2, q=2;\\
			\hskip-0.2cm	q^{nt(t\!+\!1)}\!+\!\sum\limits_{i=1}^{n\!-\!4}\sum\limits_{j=i\!+\!1}^{n}q^{(n\!-\!i-1)t\!+\!(n\!-\!j)t^{2}}\!+\!q^{2t^{2}\!+\!2t}\!+\!q^{t^{2}\!+\!2t}\!+\!2q^{2t}\!+\!q^{t}\!+\!N\!+\!1,&\!\!\!\!\!\mbox{ if} \   t \neq 2;\\
		\end{array}\right.$
	\end{center}
	where  $N=q^{(t^{2}+t)n-\frac{13t^{2}}{4}-\frac{5t}{2}}+q^{(t^{2}+t)n-\frac{13t^{2}}{4}-3t}+2q^{(t^{2}+t)n-\frac{7t^{2}}{2}-\frac{5t}{2}}+2q^{(t^{2}+t)n-\frac{7t^{2}}{2}-3t}+q^{(t^{2}+t)n-\frac{15t^{2}}{4}-\frac{5t}{2}}+q^{(t^{2}+t)n-\frac{15t^{2}}{4}-3t}$.

	\item[$(2)$]If $n=5$, then   $\overline{A}_{q}(8t,4t,3t)
	\geq$
	\begin{center}
		$ \left\{\begin{array}{llll}
			q^{30}\!+\!q^{18}+q^{14}\!+\!q^{12}\!+\!7q^{10}\!+\!2q^{9}\!+\!q^{8}\!+q^6+\!2q^{4}\!+\!q^{2}\!+\!1,&\!\!\!\mbox{if} \  t=2, q>2;\\
			q^{30}\!+q^{18}+q^{14}\!+\!q^{12}\!+\!5q^{10}\!+\!q^{8}+q^6\!+\!2q^{4}\!+\!q^{2}\!+\!1,&\!\!\!\mbox{if} \  t=2, q=2;\\
			q^{5t(t\!+\!1)}\!+\!\sum\limits_{j=2}^{5}q^{3t\!+\!(5\!-\!j)t^{2}}\!+\!q^{t^{2}\!+\!3t}\!+\!q^{t^{2}\!+\!2t}\!+\!2q^{t^{2}\!+\!\frac{5t}{2}}\!+\!4q^{\frac{3t^{2}}{4}\!+\!\frac{5t}{2}}+q^{\frac{7t}{2}}+q^{3t}\!+\!2q^{2t}\!+\!q^{t}\!+\!1,&\!\!\! \mbox{if} \  t\neq2, q\geq 4;\\
			q^{5t(t\!+\!1)}\!+\!\sum\limits_{j=2}^{5}q^{3t+(5\!-\!j)t^{2}}\!+\!q^{t^{2}\!+\!3t}\!+\!2q^{t^{2}\!+\!\frac{5t}{2}}\!+\!4q^{\frac{3t^{2}}{4}\!+\!\frac{5t}{2}}\!+q^{\frac{7t}{2}}\!+\!2q^{3t}\!+\!2q^{2t}\!+\!q^{t}\!+\!1,&\!\!\!\! \ \mbox{if} \ t\neq2, q< 4.\\
		\end{array}\right.$
	\end{center}
\end{itemize}
\end{theo}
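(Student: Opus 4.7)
The plan is to assemble the bound directly from Theorem~\ref{theo1} together with the two cardinality computations carried out in Subsections 3.2.1 and 3.2.2. By Theorem~\ref{theo1}, the union $\mathscr{D}=\mathscr{A}\cup\mathscr{B}\cup\mathscr{C}$ is already known to be an $((n+3)t,4t,3t)_q$ CDC, so the work reduces to (i) verifying that $\mathscr{D}$ contains a lifted MRD code (so that $|\mathscr{D}|$ is a legitimate lower bound for $\overline{A}_q$), (ii) checking that $\mathscr{A}$, $\mathscr{B}$, $\mathscr{C}$ are pairwise disjoint so that cardinalities add, and (iii) carrying out the case analysis to assemble the closed-form expressions in the statement. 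Point (i) is immediate from the construction: $\mathscr{A}$ is defined as the lifted $[3t\times nt,2t]_q$ MRD code, which has cardinality $q^{nt(t+1)}$ by Theorem~\ref{mrtheo1}. Point (ii) follows because the identifying vectors in $\mathcal{A},\mathcal{B},\mathcal{C}$ have pairwise different leading blocks of length $3t$, namely $(1^t\,1^t\,1^t)$, $(1^t\,0^t\,0^t)$, and $(0^t\,1^t\,0^t)$.

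Before summing, I would pause to record where the hypothesis $q^{2t}\geq s$ enters. It is used exactly once, to guarantee that the $[t\times 2t,q^{2t},t]_q$ MRD code $\mathcal{M}$ contains at least $s$ distinct matrices, one for each class $\mathcal{B}_1,\ldots,\mathcal{B}_s$, to be placed in the common quasi-pending block; this is what makes Lemma~\ref{lempendingblock} applicable across different classes in the proof of Theorem~\ref{theo1}. The companion requirement $q^t\geq|\mathcal{C}|$ ($=8$ or $4$) is handled by the built-in case-split in the definition of $\mathcal{C}$ and imposes no further constraint.

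The core of the proof is then just arithmetic: $|\mathscr{D}|=|\mathscr{A}|+|\mathscr{B}|+|\mathscr{C}|=q^{nt(t+1)}+|\mathscr{B}|+|\mathscr{C}|$, with $|\mathscr{B}|$ and $|\mathscr{C}|$ read off from the piecewise formulas derived at the ends of Subsections 3.2.1 and 3.2.2 respectively. For part~(1) I would substitute these expressions separating the three subcases $t=2,\,q>2$; $t=2,\,q=2$; and $t\neq 2$, and simplify the exponents in $|\mathscr{B}|$ using $(n-i-1)t+(n-j)t^2 = 2(3n-i-1-2j)$ when $t=2$, and $nt(t+1)=6n$ when $t=2$. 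For part~(2), I would substitute the $n=5$ formulas, splitting further on $t=2$ vs.\ $t\neq 2$, and on $q\geq 4$ vs.\ $q<4$, and collect like powers of $q$ that appear in both $|\mathscr{B}|$ and $|\mathscr{C}|$ (for example, the $q^{10}$ terms aggregate to $7q^{10}$ in the case $n=5$, $t=2$, $q>2$, because $|\mathscr{B}|$ contributes $2q^{10}$ and $|\mathscr{C}|$ contributes $5q^{10}$).

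The main obstacle is not conceptual but purely bookkeeping: keeping the case splits aligned between the $\mathscr{B}$-side and the $\mathscr{C}$-side, and verifying that the regrouped exponents match the four displayed formulas in part~(2) and the three in part~(1). No new mathematical content is required beyond what has already been established, so the proof reduces to writing ``by Theorem~\ref{theo1} and the cardinality computations of Subsections 3.2.1--3.2.2, $|\mathscr{D}|$ equals the claimed expression in each case,'' with the routine substitutions made explicit.
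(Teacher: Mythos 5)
Your proposal is correct and follows essentially the same route as the paper: Theorem \ref{theo1} supplies the distance property, $\mathscr A$ is the lifted MRD code giving $q^{nt(t+1)}$, and the stated bounds are exactly $|\mathscr A|+|\mathscr B|+|\mathscr C|$ assembled from the piecewise cardinality computations of Subsections 3.2.1--3.2.2 (the paper leaves this summation implicit, while you additionally make the disjointness of the three classes and the role of $q^{2t}\geq s$ explicit, which is a harmless refinement rather than a different argument). Your spot checks, e.g.\ $(n-i-1)t+(n-j)t^2=2(3n-i-1-2j)$ for $t=2$ and the aggregation $2q^{10}+5q^{10}=7q^{10}$ in the $n=5$, $t=2$, $q>2$ case, are accurate.
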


\begin{rem}
Similar to Theorems \ref{theo1} and  \ref{myth2}, one  can also  construct $((n+3)t,4t,3t)_q$ CDCs and calculate their dimensions when $t$ is odd.
\end{rem}

\subsection{Applications to improve the lower bound of $\overline{A}_q(n,d,k)$}
In this subsection, we apply Theorem \ref{myth2} to improve the lower bounds of $\overline{A}_q(n,8,6)$ for $16\leq n\leq 19$.

First,
as a direct consequence of Theorem \ref{myth2}, we get the following corollary.
\begin{coro}\label{coro1686}		
$$\overline{A}_{q}(16,8,6)
\geq \left\{\begin{array}{ll}
	q^{30}+q^{18}+q^{14}+q^{12}+7q^{10}+2q^{9}+q^{8}+q^{6}+2q^{4}+q^{2}+1,&\!\!\!\!\!\mbox{ if} \   q>2;\\
	q^{30}+q^{18}+q^{14}+q^{12}+5q^{10}+q^{8}+q^{6}+2q^{4}+q^{2}+1,&\!\!\!\mbox{if} \  q=2.
\end{array}\right.$$
\end{coro}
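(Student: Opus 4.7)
The plan is essentially a direct substitution, since the corollary falls out of Theorem \ref{myth2} by matching parameters. First I would observe that the triple $(16,8,6)$ is exactly $((n+3)t, 4t, 3t)$ with the choice $n=5$ and $t=2$: indeed $(5+3)\cdot 2 = 16$, $4\cdot 2 = 8$, and $3\cdot 2 = 6$. Since $t=2$ is even, the parity hypothesis of Theorem \ref{myth2} is satisfied.

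Next I would verify the remaining hypothesis, namely $q^{2t} \geq s$. Because $n=5$ is odd, the theorem takes $s=n=5$, so the required inequality becomes $q^4 \geq 5$. This holds for every prime power $q$, since already the smallest case $q=2$ yields $q^4 = 16 \geq 5$. Thus the hypotheses of Theorem \ref{myth2}(2) are met for all $q$.

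I would then apply Theorem \ref{myth2}(2) with $n=5$ and $t=2$, picking out the two relevant cases (namely $t=2,\, q>2$ and $t=2,\, q=2$) from the four-part formula in the $n=5$ branch. Substituting $t=2$ into
\[
q^{5t(t+1)}+\sum_{j=2}^{5}q^{3t+(5-j)t^{2}}+q^{t^{2}+3t}+q^{t^{2}+2t}+7q^{10}+\cdots
\]
collapses the exponent $5t(t+1)$ to $30$, the sum $\sum_{j=2}^{5} q^{3t+(5-j)t^{2}}$ to $q^{18}+q^{14}+q^{10}+q^{6}$, and $q^{t^2+3t}$, $q^{t^2+2t}$, $q^{2t}$, $q^t$ to $q^{10}$, $q^{8}$, $q^{4}$, $q^{2}$ respectively; combining like terms delivers exactly the formula stated in the corollary for $q>2$, and similarly for the $q=2$ expression.

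There is no genuine obstacle here; the only thing requiring care is bookkeeping of exponents when $t=2$, in particular making sure that the terms coming from the third-class contribution $|\mathscr C|$ (which account for the $7q^{10}+2q^{9}+q^{8}$ piece when $q>2$ and the $5q^{10}+q^{8}$ piece when $q=2$) are placed correctly, and that the constant $+1$ coming from $|\mathscr A|$'s companion vector in $\mathcal A$ is retained. Once these substitutions are laid out, the inequality $\overline{A}_{q}(16,8,6) \geq |\mathscr D|$ proved in Theorem \ref{theo1} yields the corollary immediately.
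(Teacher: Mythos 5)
Your proposal is correct and matches the paper's own (one-line) proof: Corollary~\ref{coro1686} is exactly the $t=2$ subcases of the $n=5$ branch of Theorem~\ref{myth2}, and you rightly check the only hypotheses, namely that $t=2$ is even and $q^{2t}=q^{4}\geq s=5$ for every prime power $q$. The only slight blemish is that your displayed substitution mixes the $t\neq 2$ formula with the $t=2$ terms (the $7q^{10}+2q^{9}+\cdots$ piece), but since the theorem already states the $t=2,\ q>2$ and $t=2,\ q=2$ bounds verbatim, the corollary follows immediately as you conclude.
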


\begin{proof}
It follows from Theorem \ref{myth2} by taking $n=5$ and $t=2$.
\end{proof}

Based on Theorem \ref{myth2}, we can further  improve the lower bound of $\overline{A}_{q}(18,8,6)$ by   adding two  new identifying vectors.

\begin{coro}\label{coro1886}
$\overline{A}_{q}(18,8,6)
\geq$
\begin{center}
	$ \left\{\begin{array}{ll}
		q^{36}+q^{24}+q^{20}+2q^{18}+3q^{17}+4q^{16}+q^{15}+q^{14}+2q^{12}+2q^{10}+2q^{8}+q^{7}+q^{6}+2q^{4}+q^{2}+1,&\!\!\!\!\!\mbox{ if} \   q>2;\\
		q^{36}+q^{24}+q^{20}+2q^{18}+3q^{17}+q^{16}+q^{14}+2q^{12}+2q^{10}+2q^{8}+q^{7}+q^{6}+2q^{4}+q^{2}+1,&\!\!\!\mbox{if} \  q=2.\\
		
	\end{array}\right.$
\end{center}

\end{coro}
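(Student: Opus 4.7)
The plan is to start from the $(18,8,6)_q$ CDC $\mathscr D$ delivered by Theorem~\ref{myth2} applied with $n=6$ and $t=2$. Comparing that formula term by term with the right-hand side of Corollary~\ref{coro1886}, every power of $q$ except $q^{10}$ and $q^{7}$ already occurs in $|\mathscr D|$ (the omission of $q^{16}$ and $q^{8}$ when $q=2$ is dictated by the $q>2$ clause of Table~1 inside the proof of Theorem~\ref{myth2}). The goal is therefore to enlarge the skeleton code by two additional identifying vectors $v^{(1)},v^{(2)}\in\mathbb F_2^{18}$ of weight $3t=6$ and to attach to each a lifted FDRMC of size $q^{10}$, respectively $q^{7}$, while preserving the minimum subspace distance $4t=8$ of $\mathscr D$.

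First I would choose the new identifying vectors. Following the blueprint of Theorem~\ref{theo1}, I would look for two vectors whose three clusters of leading ones are placed so that the resulting Ferrers diagrams share a common quasi-pending block with either the $\mathcal B$-family or the $\mathcal C$-family (a $t\times 2t$ or $t\times t$ block, respectively) and so that the residual diagrams $\mathcal F^{(1)},\mathcal F^{(2)}$, obtained by deleting the leading-one columns, satisfy $v_{\min}(\mathcal F^{(i)},2t)$ equal to $10$ and $7$. For these shapes, an optimal $[\mathcal F^{(i)},2t]_q$ code of the prescribed dimension will follow from one of Lemmas~\ref{firstcons}, \ref{stcons}, \ref{gecons}, \ref{ratlem}, or from the block-matrix construction of Lemma~\ref{comcons}; lifting each via the recipe of Theorem~\ref{mrtheo1} produces the two extra CDC families $\mathscr E_1,\mathscr E_2$ of sizes $q^{10}$ and $q^{7}$.

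Second, I would verify pairwise subspace distances between $\mathscr E_1\cup\mathscr E_2$ and $\mathscr D$. Pairs whose identifying vectors differ in at least $4t=8$ positions are handled at once by Lemma~\ref{mrlem2.10}; pairs sharing the same identifying vector are handled by Lemma~\ref{mrlem2.11}, since the attached FDRMCs have rank distance at least $2t$. The delicate case is when $d_H(\nu(X),\nu(Y))=2t$: here I would exploit the common quasi-pending block and fill it with a fresh MRD codeword so that the filled matrices of any two distinct classes differ in rank by at least $t$, exactly as in the proof of Theorem~\ref{theo1}; Lemma~\ref{lempendingblock} then lifts the subspace distance to $4t$. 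Because the MRD collection of Theorem~\ref{myth2} is only slightly enlarged, the hypothesis $q^{2t}\geq s=5$ already in force there continues to supply enough codewords.

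Finally, summing $|\mathscr D|$ with the two new contributions $q^{10}$ and $q^{7}$ and regrouping like powers of $q$ yields the right-hand side of Corollary~\ref{coro1886} in both the $q>2$ and $q=2$ cases. The step I expect to be the bottleneck is the first one: among the many weight-$6$ binary vectors of length $18$, the two chosen must simultaneously be compatible in the quasi-pending-block sense with every vector of $\mathcal A\cup\mathcal B\cup\mathcal C$, produce residual Ferrers diagrams to which one of Lemmas~\ref{firstcons}--\ref{ratlem} or~\ref{comcons} applies, and deliver precisely the dimensions $10$ and $7$ needed for the arithmetic to close.
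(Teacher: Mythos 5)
Your overall strategy matches the paper's: apply Theorem~\ref{myth2} with $n=6$, $t=2$, observe that the claimed bound exceeds that formula by exactly $q^{10}+q^{7}$, and supply the difference by adjoining two new weight-$6$ identifying vectors with lifted FDRMCs of those sizes. However, the proof as written has a genuine gap at precisely the point you flag as the bottleneck: you never exhibit the two vectors, nor verify that suitable Ferrers diagrams of dimensions $10$ and $7$ exist, nor check the distance conditions against the existing skeleton. Describing a search for such vectors is a plan, not a proof; the entire content of the corollary beyond Theorem~\ref{myth2} is the explicit construction, so without it nothing is established.

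Moreover, the route you sketch for the distance verification is more complicated than what is actually needed, and would be harder to close. You anticipate pairs with $d_H(\nu(X),\nu(Y))=2t=4$ and propose to handle them by forcing the new diagrams to share a quasi-pending block with the $\mathcal B$- or $\mathcal C$-families and filling it with fresh MRD codewords; arranging such compatibility simultaneously against every vector of $\mathcal A\cup\mathcal B\cup\mathcal C$ is delicate and is not how the paper proceeds. The paper instead chooses $\nu_1=(000011101000001010)$ and $\nu_2=(000011010100000101)$, which are \emph{not} of the extension form used in the skeleton: one checks that $d_H(\nu_1,\nu_2)=8$ and that each is at Hamming distance at least $8$ from every identifying vector of Theorem~\ref{theo1}, so Lemma~\ref{mrlem2.10} alone settles all cross distances and no quasi-pending-block argument is required for the new vectors. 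Their Ferrers diagrams are $[3,4,4,4,4,4,5,6]$ and $[2,3,4,4,4,4,4,5]$, and applying Lemma~\ref{firstcons} to the transposed diagrams yields optimal $[\mathcal F_{\nu_1},10,4]_q$ and $[\mathcal F_{\nu_2},7,4]_q$ codes, giving exactly the sizes $q^{10}$ and $q^{7}$. To repair your argument you would need to either reproduce such an explicit choice and the accompanying distance and dimension checks, or genuinely carry out the compatibility analysis you describe.
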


\begin{proof}
By taking $n=6$, $t=2$ in Theorem \ref{myth2}, we get that  $\overline{A}_{q}(18,8,6)
\geq$
\begin{center}
	$ \left\{\begin{array}{ll}
		q^{36}+q^{24}+q^{20}+2q^{18}+3q^{17}+4q^{16}+q^{15}+q^{14}+2q^{12}+q^{10}+2q^{8}+q^{6}+2q^{4}+q^{2}+1,&\!\!\!\!\!\mbox{ if} \   q>2;\\
		q^{36}+q^{24}+q^{20}+2q^{18}+3q^{17}+q^{16}+q^{14}+2q^{12}+q^{10}+2q^{8}+q^{6}+2q^{4}+q^{2}+1,&\!\!\!\mbox{if} \  q=2.\\
		
	\end{array}\right.$
\end{center}

Now let
$\nu_{1}=(000011101000001010)$ and $\nu_{2}=(000011010100000101)$.
Then $\mathcal{F}_{\nu_1}=[3, 4, 4, 4, 4, 4, 5, 6]$ and $\mathcal{F}_{\nu_2}=[2, 3, 4, 4, 4, 4, 4, 5 ]$. Considering the transposed Ferrers diagrams of $\mathcal{F}_{\nu_1}$  and $\mathcal{F}_{\nu_2}$, we get from Lemma  \ref{firstcons}
that there exists an optimal $[\mathcal{F}_{\nu_1},10,4]_q$ FDRMC and an optimal  $[\mathcal{F}_{\nu_2},7,4]_q$ FDRMC.  Lifting these two FDRMCs to CDCs, we get
a $(18,q^{10},8,6)_q$ CDC $\mathscr{C}_{\nu_1}$ and  a $(18,q^7,8,6)_q$ CDC $\mathscr{C}_{\nu_2}$.
Note that $d_H(\nu_1,\nu_2)=8$ and $d_H(\nu_i,v)\geq 8$ for $i\in \{1,2\}$ and each   identifying vector $v$ given  in Theorem \ref{theo1}. It follows from Lemma  \ref{mrlem2.10} that
$\mathscr D\cup \mathscr C_{\nu_1}\cup\mathscr C_{\nu_2}$
is a $(18,8,6)_q$ CDC. This completes the proof of the corollary.
\end{proof}

Next, we improve the  lower  bound of $\overline{A}_{q}(17,8,6)$ and $\overline{A}_{q}(19,8,6)$  by making a slight modification of the construction in Theorem \ref{theo1}.

To  construct a $(17,8,6)_q$ CDC,
let $X$ be the set of  all the identifying vectors used  in  the construction of $(16,8,6)_q$ CDC  in Corollary \ref{coro1686}. Set
$Y=\{(v,0)\mid v\in X\}$. Then we get the 19 identifying vectors in Table 3.
\begin{center}
Table 3

\medskip
\begin{tabular}{|c|cc|cc|}
	\hline
	${\rm set}$ &   ${\rm  vector}$ &  ${\rm size }$ & ${\rm   vector}$ & ${\rm size}$  \\
	\hline
	
	$\mathcal{A}$  & $\nu_{1}^{1}=\left( 11111100000000000\right) $&$q^{33} $ & & \\[-0.15em]
	\hline		\multirow{5}{*}{$\mathcal{B}$} &  $\nu_{1}^{2}=\left(11000011110000000 \right) $ &$q^{21} $ & $\nu_{2}^{2}=\left(11000011001100000 \right) $ &$q^{17} $  \\  [-0.15em]

	&
	$\nu_{3}^{2}=\left(11000011000011000 \right) $&$q^{13} $&$\nu_{4}^{2}=\left(11000011000000110 \right) $ &$q^{9} $ \\[-0.15em]
	
	&$\nu_{5}^{2}=\left(11000000111100000 \right) $&$q^{14} $ &
	$\nu_{6}^{2}=\left(11000000110011000 \right) $&$q^{10} $ \\[-0.15em]
	
	&$\nu_{7}^{2}=\left(11000000110000110 \right) $&$q^{6} $ &
	$\nu_{8}^{2}=\left(11000000001111000 \right) $&$q^{6} $ \\[-0.15em]
	
	&
	$\nu_{9}^{2}=\left(11000000001100110 \right) $&$q^{4} $ &$\nu_{10}^{2}=\left(11000000000011110 \right) $&$1 $
	\\
	\hline
	\multirow{2}{*}{$\mathcal{C}$} &
	$\nu_{1}^{3}=\left( 00110010101010000\right) $ &$q^{15} $ &
	$\nu_{2}^{3}=\left( 00110001011010000\right) $ &$q^{14} $  \\[-0.15em]
	
	&
	$\nu_{3}^{3}=\left( 00110001101001000\right)  $&$q^{14} $ &$\nu_{4}^{3}=\left( 00110001100110000\right)$&$q^{14} $\\ [-0.15em]
	&$\nu_{5}^{3}=\left( 00110010011001000\right) $ &$q^{13}$ &$\nu_{6}^{3}=\left( 00110010100101000\right)  $ &$q^{13}$
	\\[-0.15em]
	&$\nu_{7}^{3}=\left( 00110010010110000\right)$&$q^{13} $ &$\nu_{8}^{3}=\left( 00110001010101000\right) $ &$q^{12}$
	\\
	\hline
\end{tabular}
\end{center}

Now we calculate the sizes of FDRMCs corresponding to the identifying vectors given  in Table 3.  Denote by $\mathcal {F}^{2}_{j}$ the subdiagram of the  Ferrers diagram of $v^{2}_{j}$ by deleting the leftmost $4$ columns. Then we get that $\mathcal{F}^{2}_{5}=[2,2,6,6,6,6,6]$ and $\mathcal{F}^{2}_{9}=[2,2,2,2,4,4,6]$. \begin{itemize}
\item[$\bullet$]Let $\mathcal{F}'=[2,6,6,6,6,6]$ be the subdiagram of $\mathcal{F}_{5}^2$ by deleting the leftmost column.  Then $v_{\min}(\mathcal{F}',4)=14.$ It follows from Lemma \ref{firstcons} that there exists  an optimal $[\mathcal{F}',14,4]_q$  FDRMC. Then there exists an  $[\mathcal{F}^{2}_{5},14,4]_q$  FDRMC.  The  FDRMC corresponding to  $\nu^{2}_{6}$, $\nu^{2}_{7}$ and $\nu^{2}_{8}$ can be constructed similarly.
\item[$\bullet$]Let
$\mathcal{F}^{2}_{9}= \left(\begin{smallmatrix}
	\mathcal{F}_{1}&\mathcal{F}_{3}\\
	&\mathcal{F}_{2}
\end{smallmatrix}\right),$
where $\mathcal{F}_{1}=\begin{tikzpicture}[scale=0.3]

	\draw[fill=black] (0,0) circle(0.1);
	\draw[fill=black] (0.6,0) circle(0.1);
	\draw[fill=black] (1.2,0) circle(0.1);
	\draw[fill=black] (1.8,0) circle(0.1);
	\draw[fill=black] (0,-0.6) circle(0.1);
	\draw[fill=black] (0.6,-0.6) circle(0.1);
	\draw[fill=black] (1.2,-0.6) circle(0.1);
	\draw[fill=black] (1.8,-0.6) circle(0.1);
	
\end{tikzpicture}$ ,
$\mathcal{F}_{2}=\begin{tikzpicture}[scale=0.3]

	\draw[fill=black] (0,-0.6) circle(0.1);
	\draw[fill=black] (0.6,-0.6) circle(0.1);
	\draw[fill=black] (1.2,-0.6) circle(0.1);
	
	\draw[fill=black] (0,-1.0) circle(0.1);
	\draw[fill=black] (0.6,-1.0) circle(0.1);
	\draw[fill=black] (1.2,-1.0) circle(0.1);
	\draw[fill=black] (1.2,-1.4) circle(0.1);
	\draw[fill=black] (1.2,-1.8) circle(0.1);
\end{tikzpicture}$  and
$\mathcal{F}_{3}=\begin{tikzpicture}[scale=0.3]

	\draw[fill=black] (0,0) circle(0.1);
	\draw[fill=black] (0.6,0) circle(0.1);
	\draw[fill=black] (1.2,0) circle(0.1);
	
	\draw[fill=black] (0,-0.6) circle(0.1);
	\draw[fill=black] (0.6,-0.6) circle(0.1);
	\draw[fill=black] (1.2,-0.6) circle(0.1);

\end{tikzpicture}$ . By Lemma \ref{firstcons}, there exists an optimal $[\mathcal F^{t}_{1},4,2]_{q}$ FDRMC and an optimal $[\mathcal F_{2},4,2]_{q} $ FDRMC. It follows from Lemma \ref{comcons} that there exists an optimal $[\mathcal F^{2}_{9},4,4]_{q}$ FDRMC.
\end{itemize}	

It is easy to check by	Lemma \ref{firstcons}  that there exists an  optimal FDRMC for any one  of  the remaining  identifying vectors.
Consequently, we get the following result.

\begin{coro}\label{coro1786}$\overline{A}_{q}(17,8,6)
\geq$
\begin{center}
	$ \left\{\begin{array}{ll}
		q^{33}+q^{21}+q^{17}+q^{15}+4q^{14}+4q^{13}+q^{12}+q^{10}+q^{9}+2q^{6}+q^{4}+1,&\!\!\!\!\!\mbox{ if} \   q>2;\\
		q^{33}+q^{21}+q^{17}+q^{15}+4q^{14}+q^{13}+q^{10}+q^{9}+2q^{6}+q^{4}+1,&\!\!\!\mbox{if} \  q=2.\\
		
	\end{array}\right.$
\end{center}
\end{coro}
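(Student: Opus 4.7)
The plan is to extend the $(16,8,6)_q$ construction of Corollary~\ref{coro1686} to ambient dimension $17$ by appending a single zero to every identifying vector, producing the set $Y=\{(v,0)\mid v\in X\}$ of $19$ vectors displayed in Table~3. Because appending a common zero coordinate preserves all pairwise Hamming distances, the pairs that already satisfied $d_H\ge 8$ still do, and the pairs previously handled by a quasi-pending block retain exactly the same block structure. Lemma~\ref{mrlem2.10} and Lemma~\ref{lempendingblock} will therefore show, verbatim as in the proof of Theorem~\ref{theo1}, that the union of the lifted FDRMCs associated with the vectors of $Y$ forms a $(17,8,6)_q$ CDC.

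It then remains to compute the dimension of an optimal $[\mathcal{F}_v,4]_q$ FDRMC for each $v\in Y$ and sum the resulting sizes. Appending one zero enlarges every Ferrers diagram by a single rightmost column of $6$ dots (the new coordinate contributes a free entry in each of the six rows), which explains why the exponents in Table~3 are uniformly shifted upward relative to the length-$16$ case. I would split the computation by class: for the unique vector $\nu_1^1\in\mathcal{A}$, Theorem~\ref{mrtheo1} gives a lifted MRD code of size $q^{33}$; for $\nu_i^2\in\mathcal{B}$ with $i\in\{1,2,3,4\}$ and for every $\nu_j^3\in\mathcal{C}$, the rightmost three columns of the truncated Ferrers diagram are full, so Lemma~\ref{firstcons} applies directly; for $\nu_i^2$ with $i\in\{5,6,7,8\}$, I would truncate one further leftmost column of the subdiagram obtained after deleting the first four columns and then apply Lemma~\ref{firstcons}, yielding $v_{\min}=14$ for $\nu_5^2$ (and the smaller values for the other three); finally, for $\nu_9^2$ I would decompose the Ferrers diagram into the block form $\bigl(\begin{smallmatrix}\mathcal{F}_1&\mathcal{F}_3\\ &\mathcal{F}_2\end{smallmatrix}\bigr)$ already spelled out just before the statement and glue optimal $[\mathcal{F}_1^{t},4,2]_q$ and $[\mathcal{F}_2,4,2]_q$ codes via Lemma~\ref{comcons} to obtain an $[\mathcal{F}^2_9,4,4]_q$ code of size $q^4$.

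The $q=2$ subcase would simply drop those contributions whose optimal FDRMC exists only for larger alphabets, exactly mirroring the analogous phenomenon in Corollary~\ref{coro1686}; summing over the surviving entries of Table~3 then produces the claimed bound. The main obstacle I anticipate is the block decomposition for $\nu_9^2$, where I must verify that the dots really split into three subdiagrams whose two nontrivial parts admit the required distance-$2$ FDRMCs and that Lemma~\ref{comcons} combines them into an FDRMC of the expected total dimension; every other step is a systematic bookkeeping exercise driven by Lemma~\ref{firstcons}.
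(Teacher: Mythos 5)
Your overall route coincides with the paper's: append a zero to the $16$-dimensional skeleton, observe that Hamming distances and the quasi-pending-block structure are untouched so Theorem \ref{theo1}'s distance argument carries over verbatim, and then recompute the FDRMC dimensions vector by vector -- Lemma \ref{firstcons} for most diagrams, the ``delete one more leftmost column'' truncation giving $v_{\min}=14$ for $\nu_5^2$ (and its analogues for $\nu_6^2,\nu_7^2,\nu_8^2$), and the block decomposition of $\mathcal F^2_9$ glued by Lemma \ref{comcons} into a $[\mathcal F^2_9,4,4]_q$ code. Two cosmetic slips do not hurt: the exponents are \emph{not} uniformly shifted relative to the length-$16$ case (the extra column of six dots raises $v_{\min}$ by an amount between $0$ and $4$ depending on the diagram), and ``the rightmost three columns are full'' should read ``each of the rightmost three columns of the transposed diagram has at least $n$ dots''; since you recompute every dimension anyway, neither affects the outcome.

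The genuine gap is your treatment of $q=2$. You propose to ``drop those contributions whose optimal FDRMC exists only for larger alphabets'', but every FDRMC used here (via Lemmas \ref{firstcons}, \ref{stcons}, \ref{comcons}) exists over every prime power, so this criterion would drop nothing and you would claim the $q>2$ bound at $q=2$ -- and that code would not have distance $8$. The actual mechanism, both in Corollary \ref{coro1686} and here, is the capacity constraint $q^{t}\geq|\mathcal C|$ on the third class: the common $t\times t=2\times 2$ quasi-pending block must be filled with pairwise rank-distance-$2$ matrices from a $[2\times 2,q^{2},2]_{q}$ MRD code, of which there are only $q^{2}=4$ when $q=2$, so at most four of the eight third-class identifying vectors can be retained (Lemma \ref{lempendingblock} otherwise gives no guarantee of $d_S\geq 4t$ for vectors at Hamming distance $2t$). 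Keeping the four largest contributions $q^{15}+3q^{14}$ and discarding $3q^{13}+q^{12}$ is what produces the $q=2$ line of the corollary; you need to state and use this counting argument, not an FDRMC-existence argument, to justify that branch.
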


Next, we construct a $(19,8,6)_q$ CDC.  Let $X$ be the set of  all identifying vectors used  in  the construction of $(18,8,6)_q$ CDC  in Corollary \ref{coro1886}.  Set
$Y\!=\!\{(v,0)\!\mid\! v\in X\}$.  All the Ferrers diagrams which correspond to the identifying vectors from $\mathcal{B}$ have a common quasi-pending block in the the 3 leftmost. We fill the same matrix of $ [2\times 3,q^{3},2]_{q}$ MRD code $\mathcal{M}$ in the quasi-pending block for each vector in the same class. For the identifying vectors from different classes, we fill different matrices of $\mathcal{M}$. Obviously, $q^{3}\geq 6$ for any prime power $q$. Then we get  the 26 identifying vectors in Table 4.
\begin{center}
Table 4

\medskip
\begin{tabular}{|c|cc|cc|}
	
	\hline
	${\rm set}$ &   ${\rm  vector}$ &  ${\rm size}$ & ${\rm  vector}$ & ${\rm size}$  \\ \hline
	
	$\mathcal{A}$  & $\nu_{1}^{1}=\left( 1111110000000000000\right) $ &$q^{39} $ & &  \\\hline
	\multirow{8}{*}{$\mathcal{B}$} &  $\nu_{1}^{2}=\left(1100001111000000000 \right) $ &$q^{27} $ & $\nu_{2}^{2}=\left(1100001100110000000 \right) $ &$q^{23} $   \\ [-0.15em]

	&$\nu_{3}^{2}=\left(1100001100001100000 \right) $&$q^{19} $&$\nu_{4}^{2}=\left(1100001100000011000 \right) $ &
	$q^{15} $ \\ [-0.15em]
	
	&$\nu_{5}^{2}=\left(1100001100000000110 \right) $ &
	$q^{11} $ &
	$\nu_{6}^{2}=\left(1100000011110000000 \right) $
	&$q^{21} $  \\ [-0.15em]
	
	&$\nu_{7}^{2}=\left(1100000011001100000\right)$&$q^{17} $ &
	$\nu_{8}^{2}=\left(1100000011000011000 \right) $ &$q^{13} $ \\ [-0.15em]
	
	&
	$\nu_{9}^{2}=\left(1100000011000000110 \right) $&$q^{9} $  &$\nu_{10}^{2}=\left(1100000000111100000 \right) $&$q^{15} $\\ [-0.15em]
	&
	$\nu_{11}^{2}=\left(1100000000110011000 \right) $ &$q^{11} $ &
	$\nu_{12}^{2}=\left(1100000000110000110 \right) $ &$q^{7} $ \\ [-0.15em]
	
	&$\nu_{13}^{2}=\left(1100000000001111000 \right) $&$q^{7} $ &
	$\nu_{14}^{2}=\left(1100000000001100110 \right) $ &$q^{4} $ \\ [-0.15em]
	&
	$\nu_{15}^{2}=\left(1100000000000011110 \right) $ &$1 $&&\\
	\hline
	\multirow{5}{*}{$\mathcal{C}$} &
	$\nu_{1}^{3}=\left( 0011001010101000000\right) $ &$q^{21} $ &
	$\nu_{2}^{3}=\left( 0011000101101000000\right) $ &$q^{20} $  \\ [-0.15em]
	
	&
	$\nu_{3}^{3}=\left( 0011000110100100000\right)  $&$q^{20} $ &$\nu_{4}^{3}=\left( 0011000110011000000\right)$&$q^{20} $ \\ [-0.15em]
	&$\nu_{5}^{3}=\left( 0011001001100100000\right) $ &$q^{19} $  &$\nu_{6}^{3}=\left( 0011001010010100000\right) $ &$q^{19} $
	\\ [-0.15em]
	
	&$\nu_{7}^{3}=\left( 0011001001011000000\right) $&$q^{19} $ &$\nu_{8}^{3}=\left( 0011000101010100000\right) $ &$q^{18} $
	\\ [-0.15em]
	
	&$\nu_{9}^{3}=(0000111010000010100)$ &$q^{13} $&$\nu_{10}^{3}=(0000110101000001010)$&$q^{10} $\\
	\hline
\end{tabular}
\end{center}
\medskip

Now we calculate the sizes of FDRMCs corresponding to the identifying vectors given  in Table 4.
Denote by $\mathcal {F}^{2}_{j}$ the subdiagram of the  Ferrers diagram of $v^{2}_{j}$ by deleting the leftmost $3$ columns. 
\begin{itemize}
\item[$\bullet$]Since $\mathcal{F}^{2}_{10}=[2,2,2,2,2,6,6,6,6,6]$, we know that $v_{min}(\mathcal{F}^{2}_{10} , 4)=15$. Considering the transposed Ferrers diagram of $\mathcal{F}^{2}_{10}$, we get from Lemma \ref{gecons} that there exists an optimal $[\mathcal{F}^{2}_{10},15,4]_{q}$ FDRMC. The  FDRMC corresponding to  $\nu^{2}_{11}$ and $\nu^{2}_{12}$ can be constructed similarly.

\item[$\bullet$]Let
$\mathcal{F}^{2}_{13}= \left(\begin{smallmatrix}
	\mathcal{F}_{1}&\mathcal{F}_{3}\\
	&\mathcal{F}_{2}
\end{smallmatrix}\right),$
where $\mathcal{F}_{1}=\begin{tikzpicture}[scale=0.3]

	\draw[fill=black] (0,0) circle(0.1);
	\draw[fill=black] (0.6,0) circle(0.1);
	\draw[fill=black] (1.2,0) circle(0.1);
	\draw[fill=black] (1.8,0) circle(0.1);
	\draw[fill=black] (0,-0.6) circle(0.1);
	\draw[fill=black] (0.6,-0.6) circle(0.1);
	\draw[fill=black] (1.2,-0.6) circle(0.1);
	\draw[fill=black] (1.8,-0.6) circle(0.1);
	\draw[fill=black] (2.4,0) circle(0.1);
	\draw[fill=black] (3,0) circle(0.1);
	\draw[fill=black] (2.4,-0.6) circle(0.1);
	\draw[fill=black] (3,-0.6) circle(0.1);
	\draw[fill=black] (3.6,0) circle(0.1);
	\draw[fill=black] (3.6,-0.6) circle(0.1);
\end{tikzpicture}$ ,
$\mathcal{F}_{2}=\begin{tikzpicture}[scale=0.3]

	\draw[fill=black] (0,0) circle(0.1);
	\draw[fill=black] (0.6,0) circle(0.1);
	\draw[fill=black] (1.2,0) circle(0.1);
	
	\draw[fill=black] (0,-0.5) circle(0.1);
	\draw[fill=black] (0.6,-0.5) circle(0.1);
	\draw[fill=black] (1.2,-0.5) circle(0.1);
	\draw[fill=black] (1.2,-1.0) circle(0.1);
	\draw[fill=black] (1.2,-1.5) circle(0.1);
	\draw[fill=black] (0.6,-1.0) circle(0.1);
	\draw[fill=black] (0.6,-1.5) circle(0.1);
	\draw[fill=black] (0,-1.0) circle(0.1);
	\draw[fill=black] (0,-1.5) circle(0.1);
\end{tikzpicture}$  and
$\mathcal{F}_{3}=\begin{tikzpicture}[scale=0.3]

	\draw[fill=black] (0,0) circle(0.1);
	\draw[fill=black] (0.6,0) circle(0.1);
	\draw[fill=black] (1.2,0) circle(0.1);
	
	\draw[fill=black] (0,-0.6) circle(0.1);
	\draw[fill=black] (0.6,-0.6) circle(0.1);
	\draw[fill=black] (1.2,-0.6) circle(0.1);

\end{tikzpicture}$ . By Lemma \ref{firstcons}, there exists an optimal $[\mathcal F^{t}_{1},7,2]_{q}$ FDRMC and an optimal $[\mathcal F_{2},8,2]_{q} $ FDRMC. It follows from Lemma \ref{comcons} that there exists an optimal $[\mathcal F^{2}_{14},7,4]_{q}$ FDRMC.

\item[$\bullet$]Let
$\mathcal{F}^{2}_{14}= \left(\begin{smallmatrix}
	\mathcal{F}'_{1}&\mathcal{F}'_{3}\\
	&\mathcal{F}'_{2}
\end{smallmatrix}\right),$
where $\mathcal{F}'_{1}=\begin{tikzpicture}[scale=0.3]

	\draw[fill=black] (0,0) circle(0.1);
	\draw[fill=black] (0.6,0) circle(0.1);
	\draw[fill=black] (1.2,0) circle(0.1);
	\draw[fill=black] (1.8,0) circle(0.1);
	\draw[fill=black] (0,-0.6) circle(0.1);
	\draw[fill=black] (0.6,-0.6) circle(0.1);
	\draw[fill=black] (1.2,-0.6) circle(0.1);
	\draw[fill=black] (1.8,-0.6) circle(0.1);
	\draw[fill=black] (2.4,0) circle(0.1);
	\draw[fill=black] (3,0) circle(0.1);
	\draw[fill=black] (2.4,-0.6) circle(0.1);
	\draw[fill=black] (3,-0.6) circle(0.1);
	\draw[fill=black] (3.6,0) circle(0.1);
	\draw[fill=black] (3.6,-0.6) circle(0.1);
\end{tikzpicture}$ ,
$\mathcal{F}'_{2}=\begin{tikzpicture}[scale=0.3]

	\draw[fill=black] (0,0) circle(0.1);
	\draw[fill=black] (0.6,0) circle(0.1);
	\draw[fill=black] (1.2,0) circle(0.1);
	
	\draw[fill=black] (0,-0.4) circle(0.1);
	\draw[fill=black] (0.6,-0.4) circle(0.1);
	\draw[fill=black] (1.2,-0.4) circle(0.1);
	\draw[fill=black] (1.2,-1.0) circle(0.1);
	\draw[fill=black] (1.2,-1.6) circle(0.1);
\end{tikzpicture}$  and
$\mathcal{F}'_{3}=\begin{tikzpicture}[scale=0.3]

	\draw[fill=black] (0,0) circle(0.1);
	\draw[fill=black] (0.6,0) circle(0.1);
	\draw[fill=black] (1.2,0) circle(0.1);
	
	\draw[fill=black] (0,-0.6) circle(0.1);
	\draw[fill=black] (0.6,-0.6) circle(0.1);
	\draw[fill=black] (1.2,-0.6) circle(0.1);

\end{tikzpicture}$ . By Lemma \ref{firstcons}, there exists an optimal $[(\mathcal F'_{1})^{t},7,2]_{q}$ FDRMC and an optimal $[\mathcal F'_{2},4,2]_{q} $ FDRMC. It follows from Lemma \ref{comcons} that there exists an optimal $[\mathcal F^{2}_{14},4,4]_{q}$ FDRMC.

\end{itemize}	 	

By Lemma \ref{firstcons}, it is easy to  check  that there exists an  optimal FDRMC for any  one of the remaining  identifying vectors.
Consequently, we get the following corollary.
\begin{coro}
\label{coro1986}$\overline{A}_{q}(19,8,6)
\geq$
\begin{center}
		$ \left\{\begin{array}{ll}
		q^{39}+q^{27}+q^{23}+2q^{21}+3q^{20}+4q^{19}+q^{18}+q^{17}+2q^{15}+2q^{13}+2q^{11}+q^{10}+q^{9}+2q^{7}+q^{4}+1,&\!\!\!\!\!\mbox{ if} \   q>2;\\
		q^{39}+q^{27}+q^{23}+2q^{21}+3q^{20}+q^{19}+q^{17}+2q^{15}+2q^{13}+2q^{11}+q^{10}+q^{9}+2q^{7}+q^{4}+1,&\!\!\!\mbox{if} \  q=2.\\
		
	\end{array}\right.$
\end{center}
\end{coro}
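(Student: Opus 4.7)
The plan is to construct a $(19,8,6)_q$ CDC by the multilevel construction using the $26$ identifying vectors listed in Table~4. These arise by appending a zero coordinate to every identifying vector of the $(18,8,6)_q$ CDC of Corollary~\ref{coro1886} (producing one vector in $\mathcal{A}$, fifteen in $\mathcal{B}$ coming from the one-factorization of $K_{6}$, and eight in $\mathcal{C}$), and then adjoining the two new identifying vectors $\nu_{9}^{3},\nu_{10}^{3}$ of length $19$ and weight $6$.

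First I would verify the distance conditions needed for the multilevel construction. Every pair of distinct identifying vectors in Table~4 has Hamming distance at least $4t=8$ except for the pairs within $\mathcal{B}$ coming from different one-factorization classes of $K_{6}$, which are at distance $2t=4$. For those latter pairs I would employ a common quasi-pending block in the leftmost $3$ columns of every $\mathcal{B}$-diagram, a full $2\times 3$ sub-block, and fill it with distinct matrices from a $[2\times 3,q^{3},2]_q$ MRD code; this is feasible since $q^{3}\geq 8$ exceeds the five one-factorization classes for every prime power $q$. By Lemma~\ref{lempendingblock} the subspace distance of such a pair is then at least $2t+2\cdot 2=4t$. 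For the two new vectors $\nu_{9}^{3},\nu_{10}^{3}$ I would directly check that their Hamming distance to each other and to every other identifying vector in Table~4 is at least $8$, so Lemma~\ref{mrlem2.10} alone suffices for them. For the eight inherited $\mathcal{C}$-vectors the analysis of Theorem~\ref{theo1} applies verbatim provided $q^{t}\geq 8$; when $q=t=2$ only the first four are retained, which explains the second case of the statement.

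Next I would compute the dimension $\rho_{v}$ of an optimal (or, where necessary, near-optimal) FDRMC on each Ferrers diagram. Lemma~\ref{firstcons} handles the bulk of them, including the lifted MRD code $\mathscr{A}$ of dimension $39$, the generic $\mathcal{B}$- and $\mathcal{C}$-diagrams whose rightmost $2\delta-1=7$ columns are full, and the two new diagrams corresponding to $\nu_{9}^{3}$ and $\nu_{10}^{3}$ (after transposition when needed). The tall diagrams $\mathcal{F}^{2}_{10},\mathcal{F}^{2}_{11},\mathcal{F}^{2}_{12}$ require Lemma~\ref{gecons}, while $\mathcal{F}^{2}_{13}$ and $\mathcal{F}^{2}_{14}$ are decomposed into three sub-blocks and then combined via Lemma~\ref{comcons}, exactly as illustrated in the diagrammatic expressions in the proof above. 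Summing $q^{\rho_{v}}$ over the retained identifying vectors yields the two claimed lower bounds.

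The main obstacle is pure bookkeeping: verifying the pairwise Hamming distances across all $26$ identifying vectors, separating the pairs at distance exactly $2t$ (which rely on the shared quasi-pending block) from those at distance at least $4t$, and then carrying out the FDRMC dimension computations case by case without error. Once these routine but delicate checks are complete, the two cases of the bound follow directly from the formula $|\mathscr{A}|+|\mathscr{B}|+|\mathscr{C}|$.
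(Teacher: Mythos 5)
Your proposal follows the paper's argument essentially step for step: append a zero coordinate to the identifying vectors of Corollary~\ref{coro1886}, fill the common $2\times 3$ quasi-pending block of the $\mathcal{B}$-diagrams with distinct matrices of a $[2\times 3,q^{3},2]_{q}$ MRD code, keep only four of the eight third-class vectors when $q=2$, and obtain the FDRMC dimensions from Lemmas~\ref{firstcons}, \ref{gecons} and \ref{comcons} (with the block decompositions for $\mathcal{F}^{2}_{13}$ and $\mathcal{F}^{2}_{14}$), exactly as in the paper. The only inaccuracy is your opening distance claim: pairs among the eight inherited $\mathcal{C}$-vectors also have Hamming distance only $2t=4$ (e.g.\ $\nu_{1}^{3}$ and $\nu_{5}^{3}$ differ in exactly four positions), so these pairs likewise rely on the $t\times t$ quasi-pending block filled with distinct MRD matrices rather than on Lemma~\ref{mrlem2.10} alone; this is in fact covered by your later remark that the analysis of Theorem~\ref{theo1} applies verbatim under $q^{t}\geq 8$, so the argument as a whole still goes through.
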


\begin{rem} These lower bounds of $\overline{A}_q(n,k,d)$ provided in this section is greater than the previously  best known  bounds (see Table 5).
\end{rem}

\begin{center}
Table 5

\medskip
\begin{tabular}{|  p{2cm} |  p{5cm} |  p{8cm} |}
	
	\hline	
	${\rm \overline{A}_{q}(n,2\delta,k)}$  & {\rm Old lower bounds\cite{subspace}}  &{\rm New lower bounds}   \\
	\hline
	$\overline{A}_{2}(16,8,6)$  & 1074024641 & 1074029925 [Cor.\ref{coro1686}]\\
	\hline
	$\overline{A}_{3}(16,8,6)$  & 205891524417339 & 205891525289719 [Cor.\ref{coro1686}] \\
	\hline
	$\overline{A}_{4}(16,8,6)$  & 1152921573596861952 & 1152921573619470865 [Cor.\ref{coro1686}] \\
	\hline
	$\overline{A}_{2}(17,8,6)$  & 8592201220 &8592270993  [Cor.\ref{coro1786}]\\
	\hline
	$\overline{A}_{3}(17,8,6)$  &5559071026909483  &5559071196518677   [Cor.\ref{coro1786}]\\
	\hline
	$\overline{A}_{4}(17,8,6)$  & 73786980692884721729
	&73786980712498602241    [Cor.\ref{coro1786}]\\
	\hline
	$\overline{A}_{2}(18,8,6)$  & 68737628866 & 68738312933   [Cor.\ref{coro1886}]\\
	\hline
	$\overline{A}_{3}(18,8,6)$  & 150094917726595498 & 150094922568097420 [Cor.\ref{coro1886}] \\
	\hline
	$\overline{A}_{4}(18,8,6)$  & 4722366764344622977361 & 4722366765651669963281  [Cor.\ref{coro1886}]\\
	\hline
	$\overline{A}_{2}(19,8,6)$  &549901076262 &549906503441 [Cor.\ref{coro1986}]\\
	\hline
	$\overline{A}_{3}(19,8,6)$  &4052562778621323226  &4052562909338630152   [Cor.\ref{coro1986}] \\
	\hline
	$\overline{A}_{4}(19,8,6)$  &  302231472918056072905793 &302231473001706877649153   [Cor.\ref{coro1986}] \\\hline
	
	\hline
\end{tabular}
\end{center}

\section{Conclusions}
The paper is devoted to constructing CDCs by the multilevel construction.  We  choose  our    skeleton code based on the transformations of the vectors related to
a  one-factorization of a complete graph, and calculate the dimensions by use of known constructions of
optimal FDRMCs. As an applications, we improve the lower bounds of $\overline{A}_q(n,8,6)$
for $16\leq n\leq 19.$ It should be noted that there are several Ferrers diagrams  for which we can not construct optimal FDRMCs in our constructions. If one  can construct optimal FDRMCs for these diagrams, then the lower bound listed in Table 5 should  be further  improved.

\medskip
\noindent{\bf Acknowledgements} Dengming Xu  is partially supported by the Natural Science Foundation of Tianjin (Grant No.23JCQNJC00050)

\bibliographystyle{unsrt}
\bibliography{paper}

\end{document}